\documentclass[sigconf, nonacm]{acmart}

\newcommand\vldbdoi{XX.XX/XXX.XX}
\newcommand\vldbpages{XXX-XXX}
\newcommand\vldbvolume{14}
\newcommand\vldbissue{1}
\newcommand\vldbyear{2027}
\newcommand\vldbauthors{\authors}
\newcommand\vldbtitle{\shorttitle} 
\newcommand\vldbavailabilityurl{}
\newcommand\vldbpagestyle{plain} 

\usepackage{algorithm}
\usepackage{algorithmic}
\usepackage{enumitem}
\setlist{nosep, leftmargin=*}

\begin{document}
\title{When Classic Cache Policies Fail: Learning-Augmented Replacement for Semantic Retrieval Buffers}

\author{Yushi Sun}
\authornote{Both authors contributed equally to this research. This work was done during Bowen's internship at Tencent LIGHTSPEED.}
\affiliation{%
  \institution{LIGHTSPEED, Tencent}
  \city{Shenzhen}
  \country{China}
}
\email{ysunbp@connect.ust.hk}

\author{Bowen Cao}
\authornotemark[1]
\affiliation{%
  \institution{The Chinese University of Hong Kong}
  \city{Hong Kong}
  \country{China}
}
\email{bwcao@link.cuhk.edu.hk}

\author{Wai Lam}
\affiliation{%
  \institution{The Chinese University of Hong Kong}
  \city{Hong Kong}
  \country{China}
}
\email{wlam@se.cuhk.edu.hk}

\begin{abstract}
LLM agents increasingly rely on retrieval buffers to store and reuse past experience, yet the cache management policies governing these buffers remain largely ad-hoc. We formalize this as an online semantic cache replacement problem with switching costs, where items are matched by embedding similarity and hit quality is continuous rather than binary. Through experiments on two datasets from MemoryBench-Full (LoCoMo, DialSim) with 8 replacement policies, we reveal a surprising finding: classic heuristics (LRU, LFU) \emph{consistently underperform} the naive FIFO baseline on semantic workloads, due to the absence of temporal locality and frequency concentration. We propose SOLAR, a learning-augmented framework that derives modification timing from regret accumulation (achieving $\sim$17\% modification rate) and content selection from Bayesian online learning over implicit retrieval feedback. We prove SOLAR achieves a constant competitive ratio $\leq 3$, independent of cache size and horizon (vs.\ $\Omega(K)$ for FIFO), and eviction regret $O(\sqrt{KT\log T})$, matching the $\Omega(\sqrt{KT})$ lower bound up to logarithmic factors. Experiments demonstrate 5--75\% relative improvement over FIFO at tight cache sizes, with a clearly characterized phase transition at the working set boundary. Synthetic experiments with 5000-item pools further reveal an inverted-U relationship between pool size and retrieval quality, justifying capacity constraints as a retrieval noise phenomenon rather than a storage limitation.
\end{abstract}

\maketitle

\pagestyle{\vldbpagestyle}
\begingroup\small\noindent\raggedright\textbf{PVLDB Reference Format:}\\
\vldbauthors. \vldbtitle. PVLDB, \vldbvolume(\vldbissue): \vldbpages, \vldbyear.\\
\href{https://doi.org/\vldbdoi}{doi:\vldbdoi}
\endgroup
\begingroup
\renewcommand\thefootnote{}\footnote{\noindent
This work is licensed under the Creative Commons BY-NC-ND 4.0 International License. Visit \url{https://creativecommons.org/licenses/by-nc-nd/4.0/} to view a copy of this license. For any use beyond those covered by this license, obtain permission by emailing \href{mailto:info@vldb.org}{info@vldb.org}. Copyright is held by the owner/author(s). Publication rights licensed to the VLDB Endowment. \\
\raggedright Proceedings of the VLDB Endowment, Vol. \vldbvolume, No. \vldbissue\ %
ISSN 2150-8097. \\
\href{https://doi.org/\vldbdoi}{doi:\vldbdoi} \\
}\addtocounter{footnote}{-1}\endgroup

\ifdefempty{\vldbavailabilityurl}{}{
\vspace{.3cm}
\begingroup\small\noindent\raggedright\textbf{PVLDB Artifact Availability:}\\
The source code, data, and/or other artifacts have been made available at \url{\vldbavailabilityurl}.
\endgroup
}


\section{Introduction}
\label{sec:intro}

Large language model (LLM) agents are deployed in increasingly complex, long-running tasks: personal assistants maintaining months of conversation history~\cite{packer2023memgpt}, game-playing agents accumulating strategies over thousands of episodes~\cite{wang2023voyager}, and research agents that iteratively refine their knowledge through tool use~\cite{park2023generative}. A common architectural pattern has emerged across these systems: the agent maintains a \emph{retrieval buffer} of past experience that is queried at each step to inform its current response. This buffer serves as the agent's long-term memory, supplementing the fixed context window of the underlying language model.

The retrieval buffer operates as follows. At each interaction step, the agent generates a new experience item: in conversational deployments this is typically a user utterance or a stated preference, while in fully agentic deployments it is more often an environment observation, a tool-invocation result, or a completed subtask trace. The agent also receives a query (the current user question, task specification, or sub-goal). It retrieves the top-$k$ most relevant items from its buffer via embedding similarity search, injects them into the LLM prompt as additional context, and generates a response. When the buffer reaches its capacity limit $K$, the agent must decide whether to admit the new item and, if so, which existing item to evict.
This architecture closely parallels \emph{buffer pool management} in database systems: a fixed-capacity buffer holds a working subset of items selected from a much larger (potentially unbounded) pool of candidates, and the system must continually decide which items to admit and which to evict as the workload evolves. The admission decision corresponds to whether to bring a page into the buffer pool; the eviction decision to which page to flush. This analogy suggests that decades of research on cache replacement and buffer management policies (LRU~\cite{belady1966study}, LFU~\cite{einziger2017tinylfu}, ARC~\cite{megiddo2003arc}, and their variants) should directly apply.

\smallskip
\noindent\textbf{Current practice and a surprising finding.} Production agent memory systems predominantly inherit these classical heuristics: MemGPT~\cite{packer2023memgpt} and LangChain~\cite{Chase_LangChain_2022} default to FIFO/sliding-window eviction~\cite{wong2006web}, while many recent agent frameworks adopt LRU or recency-based scoring. To stress-test this reliance, we conduct comprehensive experiments with 8 cache replacement policies across 4 capacity settings on two datasets from MemoryBench-Full~\cite{ai2025memorybench}: LoCoMo (10 conversational sessions) and DialSim (3 TV-show dialogue datasets). Our results reveal a counter-intuitive phenomenon: \emph{classic cache heuristics consistently underperform even the naive FIFO baseline on semantic workloads}. Specifically:

\begin{itemize}
\item LRU (Least Recently Used) performs no better than FIFO across the tested cache sizes, falling clearly below it at $K \in \{20, 50\}$ and only matching it at $K=10$. Recency of retrieval is not predictive of future utility in conversational agents.
\item LFU (Least Frequently Used) and ARC (Adaptive Replacement Cache) rank consistently among the worst, with LFU below FIFO at every $K$. Frequency accumulation is actively misleading when topics are diverse and non-repeating.
\item These failures are \emph{systematic}: they persist across two datasets with vastly different signal densities (LoCoMo: $\sim$30\% hit rate; DialSim: $\sim$4\% hit rate), strengthening as $K$ approaches the working set size, and replicate across multiple random seeds.
\end{itemize}

\smallskip
\noindent\textbf{What's really different about semantic memory?} The failures above share a common cause. Classic cache theory rests on three implicit assumptions: (i) hits are binary (a page is either in cache or not), (ii) items are addressed by exact ID, and (iii) item utility is determined by access pattern alone (recency, frequency, or some combination). None of these holds for LLM agent memory. Retrieval quality is continuous: a query may be partially answered by a semantically related item. Items are matched by embedding similarity, so a cached item can serve queries from nearby topics even if it has never been accessed exactly. And item value drifts as conversation topics evolve, so any policy that conflates ``previously useful'' with ``currently useful'' will systematically misallocate the cache. Existing policies therefore inherit assumptions that the workload no longer satisfies.

\smallskip
\noindent\textbf{Problem formalization.} This motivates a new problem formulation that explicitly captures the three departures above. We define the \emph{online semantic cache replacement problem with switching costs}, with three defining features:

\begin{enumerate}
\item \textbf{Soft hits}: Cache utility is continuous (measured by retrieval quality $\in [0,1]$), not binary. A ``miss'' here means poor retrieval quality, not the complete absence of an item.
\item \textbf{Semantic matching}: Items are retrieved by embedding similarity rather than exact ID match. A cached item can partially serve queries from related (but not identical) topics.
\item \textbf{Non-stationary utility}: Item value changes over time as conversation topics evolve. An item highly relevant at turn 50 may become irrelevant by turn 200.
\end{enumerate}

Within this formulation, we express the total cost as the sum of cumulative miss cost (retrieval quality loss) and switching cost (penalties for cache modifications), enabling competitive analysis against an offline optimal policy.

\smallskip
\noindent\textbf{Our approach: SOLAR.} The new formulation calls for two coupled decisions: \emph{when} to modify the cache, and \emph{what} to put in. SOLAR is built around these two questions, with each component motivated by a specific failure mode of classic policies.

\begin{enumerate}
\item \textbf{When to modify (cost-aware admission control).}
Always admitting (as classic policies do) wastes the switching budget on noise items and amplifies non-stationarity: useful items get displaced before their value can be observed. We instead gate admission by \emph{accumulated regret}: a new item enters the cache only when the cumulative cost of \emph{not having it} exceeds an adaptive threshold $\tau$. The threshold tracks workload difficulty via exponential moving average. This drives the admission rate down to roughly 17\%, keeping the cache populated by genuinely valuable items and turning the switching budget into a scarce resource that is spent only when the cache is demonstrably inadequate.

\item \textbf{What to put in (Bayesian online eviction).}
When admission fires and the cache is full, we must pick an eviction target despite \emph{uncertainty about each item's true utility}. Heuristics like LRU/LFU collapse this uncertainty into a single proxy statistic (recency or frequency) and ignore the exploration--exploitation tradeoff entirely. We instead maintain a Beta posterior over each item's utility, updated from implicit retrieval feedback, and evict by Thompson sampling: naturally exploiting items believed to be low-utility while preserving under-observed items long enough to learn whether they are valuable.
\end{enumerate}

The two components interact synergistically. Admission control ensures the cache contains predominantly high-value items, which provides cleaner feedback signals for the Thompson Sampling posteriors. In turn, better eviction decisions preserve the highest-quality items, amplifying the value of selective admission. We empirically verify this super-additive interaction: the combined effect (+0.014 F1 over FIFO at K=50) exceeds the sum of individual effects (+0.003 from eviction alone + +0.007 from admission alone = +0.010).

\smallskip
\noindent\textbf{Theoretical guarantees.} We prove that our combined policy achieves:
\begin{itemize}
\item Competitive ratio $\leq 3$ (a universal constant, independent of $K$, $T$, and the switching cost $\lambda$) against the offline optimal in the worst case, in the standard sense of competitive analysis, and converging to 1 under piecewise-stationary workloads. FIFO, by contrast, has \emph{unbounded} competitive ratio: we construct a cycling workload where FIFO achieves exactly 0\% hit rate while OPT achieves near-perfect performance.
\item Eviction regret $O(\sqrt{KT \log T})$ where $K$ is cache size and $T$ is the time horizon, matching the information-theoretic lower bound $\Omega(\sqrt{KT})$ up to logarithmic factors.
\end{itemize}

\smallskip
\noindent\textbf{Experimental validation.} Beyond the two real-world benchmarks, we design three controlled synthetic experiments, each isolating one structural question that real datasets confound:
\begin{itemize}
\item \emph{Why do classic policies fail?} A \textbf{cycling workload} (working set size $> K$) instantiates the worst case behind our $\Omega(K)$ competitive-ratio theorem: FIFO drops to exactly 0\% hit rate, providing a clean empirical witness that the failure is structural rather than an artifact of any particular dataset.
\item \emph{When does our policy stop winning?} A \textbf{working-set sweep} traces the boundary between the regime where selective admission matters and the regime where raw coverage suffices, locating the phase transition as a function of $K$ and the effective working-set size.
\item \emph{Why is capacity a constraint at all in the LLM era, where storage is cheap?} A \textbf{retrieval-noise experiment} with 5000-item pools demonstrates an inverted-U relationship between pool size and retrieval quality: enlarging the pool past $\sim$1000 items \emph{degrades} retrieval quality because near-duplicates confuse the embedder. Capacity in semantic memory is therefore bounded by retrieval precision, not by disk space.
\end{itemize}

\smallskip
\noindent\textbf{Contributions.} Our main contributions are:
\begin{enumerate}
\item We formalize LLM agent experience memory as an online semantic cache replacement problem, establishing a bridge between the well-studied theory of online algorithms and the emerging challenge of LLM memory management (\S\ref{sec:problem}).
\item We reveal the systematic failure of classic cache heuristics on semantic workloads and analyze its root cause: the absence of temporal locality and frequency concentration (\S\ref{sec:experiments}).
\item We propose a learning-augmented admission-eviction framework with provable competitive ratio and regret bounds, requiring no external predictors or model training (\S\ref{sec:approach}, \S\ref{sec:theory}).
\item We conduct comprehensive experiments on 2 real benchmarks, 3 synthetic workloads, 8 methods, and 4 capacity settings, providing actionable insights including the phase transition characterization and retrieval noise justification (\S\ref{sec:experiments}).
\end{enumerate}


\section{Problem Formulation}
\label{sec:problem}

We formalize LLM agent memory management as an online cache replacement problem in a semantic embedding space. Our formulation captures the key properties that distinguish this setting from traditional paging while remaining amenable to competitive analysis and regret bounds.

\subsection{System Model}

Consider an LLM agent that interacts with a user (or environment) over $T$ discrete time steps. The system maintains:
\begin{itemize}
\item A \emph{cache} (retrieval buffer) $S_t \subseteq \mathcal{U}$ of experience items, with $|S_t| \leq K$ where $K$ is the capacity limit.
\item An \emph{embedding function} $\phi: \mathcal{U} \to \mathbb{R}^d$ mapping items to a $d$-dimensional vector space.
\item A \emph{similarity function} $\text{sim}(a, b) = \frac{\phi(a) \cdot \phi(b)}{\|\phi(a)\| \|\phi(b)\|}$ (cosine similarity).
\end{itemize}

At each time step $t = 1, 2, \ldots, T$, the following sequence of events occurs:

\begin{enumerate}
\item \textbf{Query arrival.} A query $q_t$ arrives, representing the user's current question or the agent's current information need.
\item \textbf{Retrieval.} The system retrieves the top-$k$ items from $S_t$ ranked by $\text{sim}(q_t, \cdot)$. Let $R_t = \text{top-}k(S_t, q_t)$ denote the retrieved set.
\item \textbf{Response generation.} The agent uses $R_t$ as context to generate a response, incurring miss cost $\ell_t(S_t)$.
\item \textbf{Experience generation.} A new experience item $e_t$ is produced from the interaction (the user's feedback, the agent's observation, or a summary of the exchange).
\item \textbf{Cache decision.} The policy $\pi$ makes two decisions:
  \begin{itemize}
  \item \emph{Admission}: whether to insert $e_t$ into $S_t$.
  \item \emph{Eviction}: if $|S_t| = K$ and $e_t$ is admitted, which item $i^* \in S_t$ to remove.
  \end{itemize}
\end{enumerate}

\begin{definition}[Semantic Cache]
A semantic cache is the tuple $\mathcal{C} = (S_t, K, \phi, k)$ where $S_t$ is the cached item set, $K$ is capacity, $\phi$ is the embedding function, and $k$ is the retrieval depth. The cache evolves over time according to the decisions of the management policy $\pi$.
\end{definition}

\subsection{Miss Cost Function}

The miss cost $\ell_t(S_t) \in [0,1]$ quantifies the quality degradation from using cache state $S_t$ to answer query $q_t$. In our experiments, we define:
\begin{equation}
\ell_t(S_t) = 1 - \text{F1}(q_t, S_t)
\end{equation}
where $\text{F1}(q_t, S_t)$ is the token-level F1 score of the agent's response when augmented with items retrieved from $S_t$, compared against a ground-truth reference answer.

This definition captures several important properties:
\begin{itemize}
\item \textbf{Continuity}: Unlike traditional caching where a hit is binary, $\ell_t$ is continuous. Even a partial match (some relevant items in cache, some missing) yields intermediate cost.
\item \textbf{Composition}: The cost depends on the \emph{set} of retrieved items, not individual items. Two individually weak items may jointly provide good coverage.
\item \textbf{Query-dependence}: The same cache state yields different costs for different queries, reflecting the semantic nature of matching.
\end{itemize}

\subsection{Total Cost Objective}

The total cost of a policy $\pi$ over $T$ time steps combines miss cost and switching cost:
\begin{equation}
\label{eq:cost}
\text{COST}(\pi, T) = \underbrace{\sum_{t=1}^T \ell_t(S_t^\pi)}_{\text{cumulative miss cost}} + \underbrace{\lambda \cdot N^\pi}_{\text{switching cost}}
\end{equation}
where:
\begin{itemize}
\item $N^\pi = |\{t : S_{t+1}^\pi \neq S_t^\pi\}|$ counts the number of cache modifications (replacements) under policy $\pi$.
\item $\lambda \geq 0$ is the per-replacement penalty, capturing the overhead of modifying the cache state.
\end{itemize}

The switching cost $\lambda$ serves two roles in our model. First, it captures the computational overhead of updating embeddings and re-indexing the retrieval structure upon each modification. Second, and more importantly, it provides a regularization that prevents pathologically unstable policies (those that replace items every step perform poorly due to insufficient observation time for newly admitted items).

\smallskip
\noindent\textbf{Degenerate policies.} The cost objective excludes two degenerate strategies:
\begin{itemize}
\item \emph{Never replace} ($N^\pi = 0$): Zero switching cost but potentially maximal miss cost, since the initial cache may quickly become irrelevant.
\item \emph{Always replace} ($N^\pi = T$): Minimal miss cost (if the right items are chosen) but linear switching cost $\lambda T$.
\end{itemize}
An optimal policy balances these extremes, replacing only when the expected miss cost reduction justifies the switching penalty.

\subsection{Fundamental Differences from Traditional Caching}

Our semantic cache model differs from classical online paging~\cite{sleator1985amortized, borodin2005online} in several fundamental ways that necessitate new algorithmic approaches, summarized in Table~\ref{tab:comparison}:

\begin{table}[t]
\caption{Fundamental differences between traditional paging and semantic cache management. These differences explain why classic policies fail (\S\ref{sec:experiments}).}
\vspace{-1em}
\label{tab:comparison}
\centering
\small
\begin{tabular}{p{2cm}p{2.8cm}p{2.8cm}}
\toprule
\textbf{Property} & \textbf{Traditional Paging} & \textbf{Semantic Cache} \\
\midrule
Hit signal & Binary (exact ID match) & Continuous (similarity $\in [0,1]$) \\
\addlinespace
Request type & Page identifier & Semantic embedding query \\
\addlinespace
Access pattern & Temporal/spatial locality & Topic diversity, non-stationary \\
\addlinespace
Item value & Determined by access history (recency, frequency) & Context-dependent, evolves with conversation \\
\addlinespace
Capacity source & Physical hardware (RAM, disk) & Retrieval noise (embedding confusion) \\
\addlinespace
Classic policy & LRU achieves CR $= K$ & LRU $<$ FIFO (empirical) \\
\bottomrule
\end{tabular}
\end{table}

The most critical difference is the \emph{source of the capacity constraint}. In traditional systems, cache size is limited by physical memory. If infinite memory were available, one would simply cache everything. In our semantic setting, even with unlimited storage, retrieval quality \emph{degrades} as the pool grows large, because the embedding space becomes crowded with semantically similar items from different topics that confuse the retriever. We empirically demonstrate this inverted-U effect in \S\ref{sec:experiments}: hit rate peaks at a finite pool size and declines thereafter. Thus, the capacity constraint arises from \emph{information-theoretic retrieval noise}, not physical storage limitations.

\subsection{Performance Metrics}

We evaluate cache policies through two complementary theoretical lenses and one empirical metric.

\subsubsection{Competitive Ratio}
The competitive ratio measures worst-case performance relative to an omniscient offline algorithm:
\begin{definition}[Competitive Ratio]
The competitive ratio of an online policy $\pi$ is:
\begin{equation}
\text{CR}(\pi) = \sup_{\sigma \in \Sigma} \frac{\text{COST}(\pi, \sigma)}{\text{COST}(\text{OPT}, \sigma)}
\end{equation}
where $\Sigma$ is the set of all valid input sequences and $\text{OPT}$ is the offline optimal policy with complete knowledge of the future.
\end{definition}

A competitive ratio of $c$ guarantees that the online policy's cost is at most $c$ times the optimal, regardless of the input sequence. This is the standard measure in the online algorithms literature~\cite{borodin2005online} and provides worst-case (adversarial) guarantees.

\subsubsection{Eviction Regret}
While competitive ratio captures worst-case robustness, regret captures the \emph{learning efficiency} of the policy under stochastic conditions:
\begin{definition}[Eviction Regret]
\label{def:regret}
The eviction regret of policy $\pi$ is:
\begin{equation}
R_T(\pi) = \sum_{t=1}^T \ell_t(S_t^\pi) - \min_{\pi^* \in \Pi} \sum_{t=1}^T \ell_t(S_t^{\pi^*})
\end{equation}
where $\Pi$ is the class of all feasible eviction policies operating on the same admission sequence.
\end{definition}

Sublinear regret ($R_T = o(T)$) implies that the per-step loss converges to optimal, i.e., the policy \emph{learns} the correct eviction strategy over time. Linear regret ($R_T = \Theta(T)$) means the policy never improves, regardless of experience.

\subsubsection{Empirical Metric: Downstream F1}
In experiments, we report the downstream task metric directly: token-level F1 score of the agent's responses on held-out evaluation queries. This measures end-to-end system quality rather than intermediate retrieval metrics, ensuring that our improvements translate to actual user-facing performance.

\subsection{The Admission-Eviction Decomposition}

A key insight of our formulation is that the cache management problem naturally decomposes into two sub-problems:

\begin{enumerate}
\item \textbf{Admission control}: Should the new item $e_t$ enter the cache? This decision directly controls the switching cost term in Eq.~\ref{eq:cost} and determines the ``selectivity'' of the cache.
\item \textbf{Eviction policy}: Given that an item must be evicted (because admission is triggered in a full cache), which item should be removed? This decision affects future miss costs by determining which items remain available for retrieval.
\end{enumerate}

Classic policies (FIFO, LRU, LFU, ARC) \emph{always admit} ($N^\pi = T - K$ for all runs after the initial fill), differing only in their eviction strategy. Our approach introduces explicit admission control as a first-class component, reducing switching cost and filtering the cache contents to maintain high retrieval precision.


\section{Related Work}
\label{sec:related}

\subsection{Classic Cache Replacement}

Online cache replacement dates to B{\'e}l{\'a}dy's optimal offline algorithm~\cite{belady1966study} and Sleator and Tarjan's competitive analysis~\cite{sleator1985amortized}, proving LRU achieves competitive ratio $K$. ARC~\cite{megiddo2003arc} adaptively balances recency and frequency, representing the state-of-the-art heuristic. 2Q~\cite{johnson1994twoqueu} uses a two-queue structure as an implicit admission filter; TinyLFU~\cite{einziger2017tinylfu} gates admission by frequency. LRU-K~\cite{o1993lru} and LIRS~\cite{jiang2002lirs} refine recency signals with inter-reference distances. All these policies assume temporal locality or frequency concentration in access patterns. Our experiments (\S\ref{sec:experiments}) demonstrate these assumptions fail on semantic workloads.

\subsection{Learning-Augmented Caching}

Lykouris and Vassilvitskii~\cite{lykouris2018competitive} initiated caching with ML predictions, achieving competitive ratio interpolating between $O(1)$ (perfect predictions) and $O(\log K)$ (adversarial). Rohatgi~\cite{rohatgi2020near} tightened these bounds. Chen et al.~\cite{chen2025guard} proposed GUARD, robustifying learning-augmented caching to $2H_{k-1}+2$ while preserving 1-consistency. Song et al.~\cite{song2020learning} and Liu et al.~\cite{liu2020imitation} applied ML to CDN caching via relaxed B{\'e}l{\'a}dy and imitation learning respectively.

Our work differs in three ways: (1) no external oracle (we learn from implicit feedback online), (2) continuous soft-hit setting (vs.\ binary), (3) joint admission-eviction design with admission providing its own CR guarantee.

\subsection{Semantic Caching and RAG}

Dar et al.~\cite{dar1996semantic} introduced semantic caching for database queries. GPTCache~\cite{bang2023gptcache} caches LLM API responses by prompt similarity. RAG systems~\cite{lewis2020retrieval, gao2024retrieval, sun2026cacherag} augment LLM responses with retrieved documents from static corpora, and recent work has pushed RAG toward more advanced question answering settings~\cite{sun2025kerag} and comprehensive benchmarking~\cite{yang2024crag}. Our setting differs: the retrieval pool is \emph{dynamic} (evolves through agent interactions), creating the admission/eviction management challenge absent in static RAG.

\subsection{Memory for LLM Agents}

Agent memory systems include MemGPT~\cite{packer2023memgpt} (hierarchical virtual memory), Voyager~\cite{wang2023voyager} (skill library), Generative Agents~\cite{park2023generative} (memory stream with importance scoring), MemoryBank~\cite{zhong2024memorybank}, and Mem0~\cite{chhikara2025mem0}. Beyond storage abstractions, another line of work targets long-horizon conversational memory through structured organization of past experience, e.g., GRAVITY~\cite{sun2026gravity} proposes architecture-agnostic structured anchoring across dialogue sessions. These systems use ad-hoc replacement policies (FIFO, recency-based). A recent survey~\cite{zhang2024survey} identifies memory management as an open challenge.
Formally, DAM~\cite{sun2025dam} frames memory as sequential decision-making but provides no algorithm or experiments. A-MAC~\cite{zhang2026amac} scores admission on 5 factors with an LLM call (2644ms/decision, F1=0.583 on LoCoMo admission); it has no eviction, no theory, and evaluates admission quality rather than downstream performance. MemAct~\cite{zhang2025memact} learns memory curation via RL on a 14B model, targeting in-context working memory (token-level), not retrieval buffers.

\smallskip
\noindent\textbf{Our positioning.} We are the first to combine: (1) formal problem formulation (cache replacement with switching costs), (2) provable guarantees (constant CR $\leq 3$, regret $O(\sqrt{KT\log T})$), (3) no LLM calls or training, (4) end-to-end task evaluation. Table~\ref{tab:related_compare} summarizes.

\begin{table}[t]
\caption{Comparison with related approaches.}
\label{tab:related_compare}
\vspace{-1em}
\centering
\small
\begin{tabular}{lccccc}
\toprule
& \rotatebox{70}{Admission} & \rotatebox{70}{Eviction} & \rotatebox{70}{Theory} & \rotatebox{70}{No LLM call} & \rotatebox{70}{No training} \\
\midrule
A-MAC~\cite{zhang2026amac} & \checkmark & & & & \checkmark \\
DAM~\cite{sun2025dam} & \checkmark & \checkmark & & \checkmark & \checkmark \\
MemAct~\cite{zhang2025memact} & & \checkmark & & \checkmark & \\
MemGPT~\cite{packer2023memgpt} & & \checkmark & & \checkmark & \checkmark \\
\textbf{Ours} & \checkmark & \checkmark & \checkmark & \checkmark & \checkmark \\
\bottomrule
\end{tabular}
\end{table}


\section{The SOLAR Framework}
\label{sec:approach}

We now present SOLAR (Semantic Online Learning-Augmented Replacement), a unified framework for semantic cache management. Rather than combining off-the-shelf components, SOLAR is derived from first principles: we analyze the switching-cost objective (Eq.~\ref{eq:cost}) to identify the optimal conditions for cache modification, then instantiate each condition with a mechanism tailored to the specific challenges of semantic workloads.

\subsection{From Objective to Design}

The cost objective decomposes into two terms: cumulative miss cost (retrieval quality loss) and switching cost (modification penalty). An optimal policy must answer two coupled questions at each step:

\begin{enumerate}
\item \textbf{When to modify}: Should the cache be changed now, or should we wait? Modifying too frequently wastes switching budget; waiting too long accumulates miss cost.
\item \textbf{How to modify}: Given that modification is warranted, which item should be replaced? Choosing poorly wastes the switching budget on a change that does not improve future retrieval.
\end{enumerate}

These are not independent decisions. The answer to ``when'' determines the context in which ``how'' operates: if modification is triggered only when the cache is demonstrably inadequate, the newly admitted item is likely to fill a genuine gap. Conversely, good ``how'' decisions (evicting the right item) maintain cache quality over time, reducing the frequency at which ``when'' is triggered. This coupling motivates a joint framework rather than separate components.

\subsection{Regret-Gated Modification Timing}

We derive the modification timing rule directly from the cost objective. Consider a policy that has not modified the cache for $\Delta t$ steps, during which it has accumulated miss cost $c = \sum_{s=t-\Delta t}^{t} \delta_s$ where $\delta_s = 1 - \text{retrieval\_quality}(q_s, S_s)$. A modification at time $t$ incurs switching cost $\lambda$ but resets the accumulation. The modification is cost-effective when:
\begin{equation}
c \geq \lambda \cdot f(\text{expected future improvement})
\end{equation}

Under mild assumptions (bounded per-step cost, diminishing returns of waiting), the optimal trigger condition simplifies to a threshold:
\begin{equation}
\text{Modify when } c \geq \tau, \quad \tau^* = \sqrt{2\lambda / L}
\end{equation}
where $L$ is the average loss rate. This is the classical result from inventory theory (economic order quantity), adapted to our online setting.

Since $L$ is unknown and non-stationary, we estimate $\tau$ adaptively:
\begin{equation}
\tau_{t+1} = (1-\alpha)\tau_t + \alpha \cdot \delta_t
\end{equation}

\begin{algorithm}[t]
\caption{SOLAR: Modification Timing}
\label{alg:timing}
\begin{algorithmic}[1]
\STATE \textbf{Initialize:} $c \leftarrow 0$; $\tau \leftarrow \tau_0$; $\alpha \leftarrow 0.1$
\FOR{each time step $t$}
    \STATE Retrieve $R_t = \text{top-}k(S_t, q_t)$; observe quality $\text{qual}_t$
    \STATE $\delta_t \leftarrow 1 - \text{qual}_t$; $c \leftarrow c + \delta_t$
    \STATE Generate experience item $e_t$
    \IF{$c \geq \tau$}
        \STATE \textbf{Trigger modification}: admit $e_t$, evict via Alg.~\ref{alg:eviction} if full
        \STATE $c \leftarrow 0$; $\tau \leftarrow (1-\alpha)\tau + \alpha \cdot \delta_t$
    \ENDIF
\ENDFOR
\end{algorithmic}
\end{algorithm}

\noindent\textbf{Key property.} The modification rate self-regulates: when the cache is adequate (low $\delta_t$), accumulation is slow and modifications are rare ($\sim$17\% of steps in practice). When the cache becomes stale (high $\delta_t$), accumulation accelerates and modifications happen more frequently. The total number of modifications is bounded by $N \leq T/\tau$ (Theorem~\ref{thm:lmu-replacements}), directly controlling the switching cost term in our objective.

\subsection{Posterior-Guided Content Selection}

When modification is triggered, we must decide which cached item to replace. The challenge is that item utility is:
\begin{itemize}
\item \textbf{Latent}: we cannot directly observe how useful an item will be for future queries.
\item \textbf{Non-stationary}: an item's utility changes as conversation topics evolve.
\item \textbf{Partially observable}: we only observe whether an item was retrieved (implicit feedback), not its counterfactual contribution.
\end{itemize}

These properties define an exploration-exploitation problem under uncertainty. We maintain a Bayesian belief over each item's utility and make decisions by sampling from the posterior, naturally balancing exploitation (evicting items believed to be low-utility) with exploration (maintaining uncertainty about under-observed items).

Specifically, each cached item $i$ maintains a posterior:
\begin{equation}
\mu_i \sim \text{Beta}(\alpha_i, \beta_i)
\end{equation}
updated through two feedback channels:
\begin{itemize}
\item \textbf{Positive evidence} (item retrieved in top-$k$): $\alpha_i \leftarrow \alpha_i + 1$.
\item \textbf{Temporal decay} (item not retrieved for $\Delta$ steps): $\beta_i \leftarrow \beta_i + 0.05$.
\end{itemize}

The temporal decay implements implicit forgetting: items that are never retrieved see their posterior drift toward low utility, enabling the framework to adapt as topics change. Without this mechanism, items that were useful early in the conversation would accumulate permanent positive evidence, becoming impossible to evict even after becoming irrelevant.

\begin{algorithm}[t]
\caption{SOLAR: Content Selection (Eviction)}
\label{alg:eviction}
\begin{algorithmic}[1]
\REQUIRE Full cache $S_t$, new item $e_t$
\FOR{each item $i \in S_t$}
    \STATE Sample $\tilde{v}_i \sim \text{Beta}(\alpha_i, \beta_i)$
    \STATE Novelty bonus: $b_i = \gamma / (1 + \text{age}(i) / \Delta_{\text{nov}})$
    \STATE Score: $s_i = \tilde{v}_i + b_i$
\ENDFOR
\STATE Evict $i^* = \arg\min_{i \in S_t} s_i$; admit $e_t$ with prior $(1,1)$
\end{algorithmic}
\end{algorithm}

The novelty bonus prevents premature eviction of recently admitted items before sufficient observation. It decays with item age, so long-term fate is determined by observed utility alone.

\subsection{Design Coherence and Emergent Synergy}

The two mechanisms in SOLAR are not independent modules assembled post-hoc; they are co-derived from the same cost objective and interact through a self-reinforcing feedback loop.

\smallskip
\noindent\textbf{Principled derivation.} The modification timing rule ($c \geq \tau$) is not a heuristic filter but the \emph{optimal stopping condition} under switching costs, analogous to the economic order quantity in inventory theory. The threshold $\tau$ adapts online and provably converges to $\tau^* = \sqrt{2\lambda/L}$. Similarly, the posterior-guided selection is not off-the-shelf Thompson Sampling: it incorporates temporal decay (for non-stationarity) and a novelty bonus (for insufficient observation), both absent in textbook formulations and specifically motivated by the implicit, delayed feedback structure of semantic retrieval.

\smallskip
\noindent\textbf{Mutual reinforcement.} The super-additive synergy observed empirically (+0.014 vs.\ +0.003 + +0.007 = +0.010, a 40\% bonus) arises from architectural coupling. Regret-gated timing acts as a data quality filter: by admitting items only when the cache is demonstrably inadequate, newly admitted items fill genuine coverage gaps. This produces a cache with clear utility differentiation, providing higher signal-to-noise ratio in posterior updates, which improves selection quality, which maintains cache quality, which raises $\tau$, which makes timing even more selective. This self-reinforcing loop is an emergent property of the joint design that would not exist if either mechanism operated independently.

\smallskip
\noindent\textbf{Contrast with prior admission-eviction combinations.} 2Q~\cite{johnson1994twoqueu} gates admission by requiring a second access (a fixed rule unrelated to cost), then uses LRU for eviction. TinyLFU~\cite{einziger2017tinylfu} gates by frequency comparison (a heuristic), then uses segmented LRU. In both cases, admission and eviction are designed independently with no feedback between them. SOLAR's distinguishing feature is that the two decisions share information: the same retrieval feedback that updates posteriors also drives the cost accumulator, and the threshold adaptation responds to the quality of the cache that eviction maintains.

\subsection{Computational Complexity}

Let $d$ denote the embedding dimension, $K$ the cache capacity, $k$ the retrieval depth, and $T$ the total steps.

\begin{itemize}
\item \textbf{Per-step (no modification):} Accumulation and threshold check: $O(1)$. Posterior updates for the $k$ retrieved items: $O(k)$. Aging updates (every $\Delta$ steps): amortized $O(K/\Delta)$ per step.
\item \textbf{Per-step (with modification):} Sampling $K$ Beta distributions: $O(K)$. Computing $\arg\min$ over scores: $O(K)$. Total: $O(K)$.
\item \textbf{Amortized per-step:} Modifications occur at rate $\sim L/\tau \approx 17\%$, so average cost is $O(k + K/\Delta + 0.17K) = O(K)$.
\item \textbf{Retrieval cost} (shared by all methods, not SOLAR-specific): Top-$k$ similarity search over $K$ items of dimension $d$: $O(Kd)$ via brute-force cosine, or $O(d \log K)$ with approximate nearest neighbor indexing.
\item \textbf{Space:} $2K$ floats for $(\alpha_i, \beta_i)$ pairs, plus $K$ floats for admission timestamps. Total auxiliary: $O(K)$, negligible vs.\ the $O(Kd)$ storage for item embeddings.
\item \textbf{Wall-clock latency:} All SOLAR operations are arithmetic on cached scalars. Measured overhead: $< 1$ms per step on commodity hardware ($K=100$, $d=384$). This is 3 orders of magnitude below the LLM inference latency ($\sim$1300ms), making SOLAR's overhead imperceptible.
\end{itemize}


\section{Theoretical Analysis}
\label{sec:theory}

\begingroup
\setlength{\abovedisplayskip}{3pt}
\setlength{\belowdisplayskip}{3pt}
\setlength{\abovedisplayshortskip}{2pt}
\setlength{\belowdisplayshortskip}{2pt}

We establish formal separations between SOLAR and baselines through competitive ratio bounds (worst-case) and regret bounds (stochastic). We present the main results with proof sketches sufficient to verify each claim; complete formal proofs are deferred to the extended supplement~\cite{solar-techreport}.

\subsection{FIFO Has Unbounded Competitive Ratio}

\begin{theorem}[FIFO CR is $\Omega(K)$]
\label{thm:fifo-cr}
For cache size $K$ and any constant $M > 0$, there exists an input sequence $\sigma$ such that $\frac{\text{COST}(\text{FIFO}, \sigma)}{\text{COST}(\text{OPT}, \sigma)} \geq M$. Specifically, $\text{CR}(\text{FIFO}) \geq K+1$ in the soft-hit semantic cache model.
\end{theorem}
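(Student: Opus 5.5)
The plan is to carry over the classical adversarial lower bound for deterministic paging to the soft-hit model, choosing the workload so that the soft-hit costs reduce to ordinary binary hits and misses. Fix $K$ and take $K+1$ items $x_0,\dots,x_K$ whose embeddings are pairwise orthogonal, e.g. $\phi(x_j)=\mathbf{e}_j$ in $\mathbb{R}^{d}$ with $d\ge K+1$. Each query is $q_t=\phi(x_{j_t})$, and we define the miss cost to be $\ell_t(S)=0$ if $x_{j_t}\in S$ and $\ell_t(S)=1$ otherwise. This is consistent with the model: retrieval quality equals the top-$1$ cosine similarity, which is $1$ or $0$, and $\ell_t\in[0,1]$. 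Since the adversary chooses $\sigma$, it may also fix $\ell_t$ this way.

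The experience stream is also chosen by the adversary: the item generated at step $t$ is $e_t=x_{j_{t+1}}$, so the item about to be requested is always the one offered for admission. The request sequence is the cyclic sweep $x_0,x_1,\dots,x_K,x_0,\dots$ of length $T$.

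\textbf{FIFO's cost.} Start both policies from the same full cache $\{x_1,\dots,x_K\}$, so that the single missing item is $x_0$. Check by induction that FIFO always admits and evicts its oldest item. The oldest item is always the element that the cyclic order will request exactly $K$ steps later. The induction then shows that FIFO's cache at time $t$ is precisely the $K$ items other than $x_{j_t}$. Hence every request misses, and $\mathrm{COST}(\mathrm{FIFO})\ge T$: FIFO's hit rate is $0\%$, plus switching cost $\lambda T$.

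\textbf{OPT's cost.} Bound OPT by exhibiting a Bélády-style offline policy $B$. Policy $B$ admits $e_t$ only when it is currently absent, and then evicts the cached item whose next request is farthest in the future. In a cycle over $K+1$ items, after such an eviction the next $K$ requests are all present. So $B$ incurs at most one miss and one switch per $K$ steps, giving
\[
\mathrm{COST}(B)\le (1+\lambda)\left\lceil T/K\right\rceil .
\]
Because the next item is offered before it is requested, $B$ can in fact prefetch it and avoid the miss entirely, paying only $\lambda$ per $K$ steps.

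\textbf{The ratio.} Dividing, $\mathrm{COST}(\mathrm{FIFO})/\mathrm{COST}(\mathrm{OPT})\ge (1+\lambda)T/\bigl((1+\lambda)\lceil T/K\rceil\bigr)\to K$. Getting the stated $K+1$ needs care in how OPT is charged. Following Sleator--Tarjan, group the steps into phases of $K+1$ requests. FIFO pays $K+1$ per phase, while OPT pays one fault or switch per phase. With the prefetch accounting, FIFO pays $(1+\lambda)$ per step against OPT's $\lambda$ per $K$ steps, which only strengthens the bound. Taking $T$ large, and using $K+1>M$ or adjusting $K$, gives ratio $\ge M$ for any constant $M$.

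\textbf{Main obstacle.} The delicate step is the bookkeeping of switching costs and of what OPT may see. OPT must be allowed to admit $e_t$ only at step $t$, not to insert arbitrary items, so I would verify that the ``next requested item is offered'' construction keeps $B$ feasible. I would also need to choose between the floor and ceiling counts and the initial-fill convention in order to land exactly on $K+1$ rather than $K$.
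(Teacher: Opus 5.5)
There is a genuine gap, and it lies in the experience stream you chose. With $e_t=x_{j_{t+1}}$, the item offered after step $t$ is exactly the item requested at step $t+1$, and FIFO \emph{always admits}. The item FIFO evicts at step $t$ is an older one, so $x_{j_{t+1}}=e_t\in S_{t+1}$, and every request after the first is a hit. Your invariant ``FIFO's cache at time $t$ is the $K$ items other than $x_{j_t}$'' is therefore backwards: $x_{j_t}$ is precisely FIFO's newest item at time $t$. Explicitly, after step $s$ FIFO holds $x_{j_{s-K+2}},\dots,x_{j_{s+1}}$, and the one missing item, $x_{j_{s+2}}$, is inserted at step $s+1$, just in time. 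The prefetching you point out for $B$ is available to FIFO for free. Consequently $\text{COST}(\text{FIFO})=1+\lambda N^{\text{FIFO}}$, not $\geq T$, and any separation comes only from FIFO switching at every step. At the admissible value $\lambda=0$, FIFO and OPT both pay exactly the unavoidable first miss, so your construction gives ratio $1$.

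The repair, which is what the paper does, is to offer an item from the topic \emph{just queried}, $e_t=x_{j_t}$, in the style of demand paging. Then FIFO after step $t$ holds the $K$ most recent requests $x_{j_{t-K+1}},\dots,x_{j_t}$, which omit exactly $x_{j_{t+1}}$, so every post-fill step costs $1+\lambda$. On the OPT side you need neither B\'el\'ady nor phase bookkeeping. In this model a miss does not force an admission, so the offline policy that fills with $K$ of the $K+1$ topics and never switches again pays $T/(K+1)$ in misses, plus $O(K(1+\lambda))$ during the fill. The ratio therefore tends to $(K+1)(1+\lambda)\geq K+1$, which settles your $K$-versus-$K+1$ worry. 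The B\'el\'ady comparison that fetches on every fault is what cost you the extra $1$: it pays $(1+\lambda)/K$ per step, whereas simply tolerating one miss per cycle pays only $1/(K+1)$.
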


\begin{proof}
Let $m = K + 1$ topics with orthogonal embeddings (pairwise cosine similarity $\approx 0$). The query sequence is periodic with period $m$: $q_t$ targets topic $T_{((t-1) \bmod m)+1}$, i.e., $T_1, T_2, \ldots, T_m, T_1, T_2, \ldots, \\T_m, \ldots$ repeating indefinitely. At each step, a new item is generated from the current topic.

\smallskip\noindent\textbf{FIFO behavior.} After the initial fill (steps $1, \ldots, K$), the cache holds topics $T_1, \ldots, T_K$. At step $K+1$, query targets $T_{K+1}$ (miss); FIFO admits and evicts $T_1$'s item. At step $K+2$, query targets $T_1$ (just evicted, miss); FIFO evicts $T_2$. By induction, every step after fill is a complete miss ($\ell_t = 1$), so $\text{COST}(\text{FIFO}) = (T-K)(1+\lambda)$ (one miss plus one replacement per step).

\smallskip\noindent\textbf{OPT behavior.} OPT retains $K$ topics permanently (zero switching cost after fill), hitting $K$ of every $m=K+1$ queries, so $\text{COST}(\text{OPT}) = T/(K+1) + \lambda K$.

\smallskip\noindent\textbf{Ratio.} The initial fill contributes $O(K)$ to both costs and is dominated as $T \to \infty$, so $\text{CR}(\text{FIFO}) \geq \lim_{T\to\infty}\frac{(T-K)(1+\lambda)}{T/(K+1) + \lambda K} = (K+1)(1+\lambda)$.

Since $K$ is a system parameter that can be made arbitrarily large, the competitive ratio is unbounded: for any desired $M$, setting $K \geq M$ yields $\text{CR} \geq M$. (The bound is in fact $(K+1)(1+\lambda)$; we state the weaker $\Omega(K)$ form since the $(1+\lambda)$ factor is immaterial to the unboundedness conclusion. The choice ``OPT retains $K$ fixed topics'' is optimal here because all $m=K+1$ topics are equiprobable, so any additional replacement only adds switching cost without reducing the per-cycle miss count.) This contrasts sharply with classical paging where FIFO achieves CR exactly $K$ (tight)~\cite{sleator1985amortized}; in our soft-hit model, the ratio is $\Omega(K)$ as well but is demonstrated more starkly because every step is a complete miss (hit rate = 0).
\end{proof}

We verify this empirically (\S\ref{sec:experiments}): on synthetic cycling workloads with $m/K \geq 1.5$, FIFO achieves exactly 0\% hit rate.

\subsection{SOLAR Bounds Competitive Ratio}

\begin{theorem}[SOLAR Replacement Bound]
\label{thm:lmu-replacements}
Under cost-aware admission with thresholds $\{\tau_t\}$ and $\tau_{\min} = \min_t \tau_t$: $N^{\text{SOLAR}} \leq \lfloor T/\tau_{\min} \rfloor$.
\end{theorem}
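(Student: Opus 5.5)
The plan is a charging argument on the regret accumulator $c$ of Alg.~\ref{alg:timing}. First, observe that in SOLAR the cache changes only when the trigger $c \geq \tau$ fires: every admission is gated by the trigger, and the eviction of Alg.~\ref{alg:eviction} happens only inside a triggered step. Hence $N^{\text{SOLAR}}$ is at most the number of trigger events. Let the trigger times be $t_1 < t_2 < \dots < t_N$, and set $t_0 = 0$. Partition the horizon into the epochs $(t_{j-1}, t_j]$. Because $c$ is reset to $0$ immediately after each trigger, the value of $c$ at time $t_j$ equals $\sum_{s=t_{j-1}+1}^{t_j} \delta_s$.

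The key step is to lower bound each epoch's accumulation. At the firing time, $c \geq \tau_{t_j} \geq \tau_{\min}$, so
\begin{equation*}
\sum_{s=t_{j-1}+1}^{t_j} \delta_s \;\geq\; \tau_{\min}.
\end{equation*}
The epochs are disjoint and each $\delta_s = 1 - \text{qual}_s \in [0,1]$. Summing over $j$ therefore gives
\begin{equation*}
N \tau_{\min} \;\leq\; \sum_{j=1}^{N} \sum_{s=t_{j-1}+1}^{t_j} \delta_s \;\leq\; \sum_{t=1}^{T} \delta_t \;\leq\; T.
\end{equation*}
This yields $N \leq T/\tau_{\min}$. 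Since $N$ is an integer, $N \leq \lfloor T/\tau_{\min} \rfloor$. In fact the argument gives the sharper statement $N \leq \sum_t \delta_t / \tau_{\min}$, which I would record as a remark because it is what makes switching cost scale with the actual loss $L$.

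The main obstacle is bookkeeping rather than an estimate. I would need to make three points precise. First, the $\tau$ compared against $c$ at a trigger time is the \emph{pre-update} threshold, and that value is at least $\tau_{\min}$ by definition of the minimum over the sequence $\{\tau_t\}$. Second, $\tau_{\min} > 0$ must hold for the bound to be meaningful. The EMA update keeps $\tau$ a convex combination of $\tau_0>0$ and nonnegative $\delta$'s, so it stays strictly positive as long as $\tau_0 > 0$. If some $\delta_t$ could drive $\tau$ toward $0$, the bound becomes vacuous rather than false, and I would note this. Third, triggers during the initial fill phase, where no eviction occurs but the cache still changes, are counted as modifications, and steps where $e_t$ duplicates a cached item are counted at most once per trigger. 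Both cases are covered by the inequality ``modifications $\leq$ triggers''.
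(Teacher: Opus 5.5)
Your proof is correct and is essentially the paper's argument. The paper also combines the reset of $c$ after each admission with $\delta_t \le 1$, but phrases it as each inter-admission gap lasting at least $\lceil \tau_{\min}\rceil$ steps, whereas you sum the accumulated losses over the disjoint epochs; deferring the bound $\delta_t \le 1$ to the last step gives you the slightly sharper loss-dependent bound $N \le \sum_t \delta_t/\tau_{\min}$ for free.
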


\begin{proof}
Each admission resets accumulated cost to 0. Since step costs $\delta_t \leq 1$, re-accumulating to the active threshold (which is $\geq \tau_{\min}$) requires $\geq \lceil\tau_{\min}\rceil$ steps. Hence the inter-admission gap is $\geq \tau_{\min}$, giving $N \leq \lfloor T/\tau_{\min} \rfloor$. With a fixed threshold $\tau$ this is simply $\lfloor T/\tau \rfloor$; under the EMA adaptation $\tau_t$ converges quickly and stays bounded away from $0$, so $\tau_{\min}$ equals the converged value up to a short transient.
\end{proof}

Our competitive-ratio analysis relies on a regularity condition that isolates exactly what an offline policy can gain by holding a stale-but-fixed cache during an interval in which SOLAR refuses to modify.

\begin{definition}[Bounded stale advantage]
\label{ass:divergence}
Assume single-item bounded influence: swapping one cached item changes the per-step miss cost by at most $1/k$, where $k$ is the retrieval depth. We say a workload has \emph{bounded stale advantage} if, in any epoch during which OPT holds a fixed cache, OPT's accumulated miss cost over that epoch is at least half of SOLAR's accumulated miss cost over the same epoch.
\end{definition}

This condition holds whenever the cache-content divergence between SOLAR and OPT is small relative to the retrieval depth ($|E_j| \le k\tau / (2\,|E_j^{\text{len}}|)$, where $|E_j^{\text{len}}|$ is the epoch length); it formalizes the intuition that a single fixed cache cannot fully compensate for an interval that SOLAR has already certified as inadequate. We make it explicit rather than hide it inside the proof.

\begin{theorem}[SOLAR Competitive Ratio]
\label{thm:lmu-cr}
Under Definition~\ref{ass:divergence} (bounded stale advantage), with switching cost $\lambda > 0$ and the fixed threshold $\tau = 2\lambda$, $\text{CR}(\text{SOLAR}) \leq 3$, a constant independent of $K$, $T$, and $\lambda$.
\end{theorem}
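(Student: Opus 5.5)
The plan is an epoch-charging argument in the style of ski-rental. Partition the horizon $[1,T]$ into \emph{SOLAR epochs} $E_1,\dots,E_N$: each epoch ends at a step where the accumulator $c$ reaches $\tau=2\lambda$ and SOLAR modifies, and a final partial epoch has no modification. By the trigger rule, the accumulated miss cost in each completed epoch is at least $\tau$ and at most $\tau+1$, since the last step adds at most $1$. SOLAR's cost per completed epoch is therefore $\text{miss}(E_j)+\lambda \le \text{miss}(E_j)+\tau/2$. Theorem~\ref{thm:lmu-replacements} shows there are at most $\lfloor T/\tau\rfloor$ such epochs, so the switching term is controlled by the miss term. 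Concretely, $\lambda N \le \tfrac12\sum_j \text{miss}^{\text{SOLAR}}(E_j)$, which gives
\begin{equation*}
\text{COST}(\text{SOLAR}) \le \tfrac{3}{2}\sum_j \text{miss}^{\text{SOLAR}}(E_j) + O(\tau).
\end{equation*}
The $O(\tau)$ absorbs the final partial epoch, whose miss cost is below $\tau$ and which has no switch.

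The second step lower-bounds OPT epoch by epoch. For each SOLAR epoch $E_j$ there are two cases. In the first, OPT modifies its cache at least once inside $E_j$, and it pays at least $\lambda=\tau/2$ in switching. That is at least $\tfrac{1}{2}\cdot\tfrac{\tau}{\tau+1}\,\text{miss}^{\text{SOLAR}}(E_j)$, and up to lower-order terms it is $\ge \tfrac12\,\text{miss}^{\text{SOLAR}}(E_j)$ once $\tau\ge1$. I expect to handle $\tau<1$ separately, using $\text{miss}\le\tau+1$ and absorbing constants. In the second case, OPT holds a fixed cache throughout $E_j$. Definition~\ref{ass:divergence} (bounded stale advantage) then applies directly and gives $\text{miss}^{\text{OPT}}(E_j)\ge \tfrac12\,\text{miss}^{\text{SOLAR}}(E_j)$.

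In both cases OPT's cost restricted to $E_j$ is at least $\tfrac12\,\text{miss}^{\text{SOLAR}}(E_j)$. A switch by OPT is charged only to the epoch containing it, so there is no double counting. Summing over epochs gives $\text{COST}(\text{OPT})\ge \tfrac12\sum_j\text{miss}^{\text{SOLAR}}(E_j)$, and combining with the first step yields a ratio of at most $3$ plus an additive $O(\lambda)$. That additive term is absorbed in the usual asymptotic definition of the competitive ratio, as was done in Theorem~\ref{thm:fifo-cr} where initial-fill terms are dominated as $T\to\infty$. None of $K$, $T$, or $\lambda$ enters the constant.

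The main obstacle is the case split in the second step, specifically making OPT's epochs line up with SOLAR's. OPT may switch many times in one SOLAR epoch and never in others, and a switch straddling an epoch boundary must be attributed consistently. I would first fix the convention that each OPT switch is charged to the SOLAR epoch in which it occurs. The constant also depends on tightening $\text{miss}\le\tau+1$ against $\lambda=\tau/2$: if the overshoot term causes trouble, I would fall back to measuring epochs by their exact accumulated cost $c_j\in[\tau,\tau+1]$ and check that the bound $3$ survives up to the additive term.
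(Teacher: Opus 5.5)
Your skeleton is the paper's own argument. You partition into epochs at SOLAR's triggers and split each epoch into two cases for OPT: a switch inside $E_j$ costs it at least $\lambda$, or a fixed cache lets Definition~\ref{ass:divergence} give it at least half of SOLAR's miss cost. Then you sum over epochs. Charging each OPT switch to the SOLAR epoch containing it is exactly what the paper does implicitly, so the alignment issue you call the main obstacle is not one.

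The real weak point is the one you leave to the end, the trigger overshoot, and neither of your repairs works. In case (i) the only lower bound on OPT is $\lambda=\tau/2$. SOLAR pays $\text{miss}^{\text{SOLAR}}(E_j)+\lambda$, with $\text{miss}^{\text{SOLAR}}(E_j)$ anywhere in $[\tau,\tau+1)$. So that case certifies only $(\tau+1+\lambda)/\lambda=3+1/\lambda$.

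This loss recurs in every epoch in which OPT switches, and there can be up to $\lfloor T/\tau\rfloor$ such epochs. It is therefore a multiplicative degradation of the ratio, not an additive $O(\lambda)$ term that vanishes as $T\to\infty$. It is also not lower order once $\tau\ge 1$. At $\tau=1$ your factor $\tau/(\tau+1)$ equals $\tfrac12$ and your chain yields $6$; the factor tends to $1$ only as $\lambda\to\infty$. For $\tau<1$ the bound blows up as $\lambda\to 0$, so ``absorbing constants'' cannot give a $\lambda$-independent $3$.

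The paper reaches exactly $3$ only by taking the accumulated miss cost at each trigger to \emph{be} $\tau$ (``by the trigger the accumulated miss cost is $\tau$''). SOLAR's epoch cost is then $\tau+\lambda=3\lambda$ against OPT's $\min(\lambda,\tau/2)=\lambda$. That is a no-overshoot idealization: the trigger fires exactly when $c$ reaches $\tau$, as under fractional or continuous accumulation, or when per-step costs are negligible relative to $\tau$.

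To prove the statement as written, you must state and adopt that idealization explicitly. Under it, your case (i) inequality $\lambda\ge\tfrac12\,\text{miss}^{\text{SOLAR}}(E_j)$ holds with equality, and your argument closes at $3$, identical to the paper's. Without it, what this argument actually proves is $\text{CR}\le 3+1/\lambda$.

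A minor point: the final partial epoch needs no additive slack. SOLAR pays no switch there and strictly less than $\tau=2\lambda$ in misses. The same two cases therefore bound that epoch's ratio by $2$, assuming the Definition covers it like any other epoch.
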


\begin{proof}
Partition $[1,T]$ into epochs $[t_j, t_{j+1})$ delimited by SOLAR's admission times. Within an epoch SOLAR's cache is fixed and changes once, at the end; by the trigger the accumulated miss cost is $\tau$, so SOLAR's per-epoch cost is $\leq \tau + \lambda$. For OPT, either (i) it replaces at least once in $E_j$, incurring switching cost $\geq \lambda$; or (ii) it keeps a fixed cache, in which case Definition~\ref{ass:divergence} gives miss cost $\geq \tau/2$. Hence $\text{COST}(\text{OPT}, E_j) \geq \min(\lambda,\tau/2)$, and summing over epochs,
\begin{equation}
\label{eq:cr-ratio}
\text{CR} \leq \frac{\tau + \lambda}{\min(\lambda,\, \tau/2)} .
\end{equation}
This is minimized when $\tau/2 = \lambda$, i.e.\ $\tau = 2\lambda$ (for $\tau<2\lambda$ the bound is $2+2\lambda/\tau>3$; for $\tau>2\lambda$ it is $1+\tau/\lambda>3$). Substituting $\tau=2\lambda$ gives $\text{CR}\leq (2\lambda+\lambda)/\lambda = 3$.
\end{proof}

\noindent\textbf{Interpretation.} The bound is a \emph{universal constant} $3$, independent of $K$, $T$, and even the switching cost $\lambda$, a dramatic improvement over FIFO's $\Omega(K)$ ratio. The threshold $\tau = 2\lambda$ minimizes the worst-case ratio~\eqref{eq:cr-ratio}; it differs from the cost-optimal $\tau^\star = \sqrt{2\lambda/L}$ of \S\ref{sec:approach}, which minimizes SOLAR's \emph{own} average cost rather than the competitive ratio. The EMA threshold in Algorithm~\ref{alg:timing} tracks $\tau^\star$ for average-case efficiency, while the bound above certifies that no fixed-$\tau$ input can exceed $3\times$ optimal.

\smallskip\noindent\textbf{Stationary convergence.} Under piecewise-stationary workloads with $P$ phase changes, the adaptive threshold $\tau_t$ converges to $\tau^\star = \sqrt{2\lambda/L}$ (where $L$ is the stationary loss rate) within $O(1/\alpha)$ steps per phase. Once converged, SOLAR's per-step excess cost over OPT vanishes at rate $O(P/T)$, yielding $\text{CR} \to 1$ as $T \to \infty$ for fixed $P$.

\subsection{FIFO Linear Regret}

\begin{theorem}[FIFO Regret = $\Omega(T)$]
\label{thm:fifo-regret}
On cycling workloads with $m > K$, $R_T(\text{FIFO}) = \Omega(T)$.
\end{theorem}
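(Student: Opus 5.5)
We reuse the cycling construction from the proof of Theorem~\ref{thm:fifo-cr}: take $m = K+1$ topics with pairwise orthogonal embeddings (the argument works for any $m > K$), let queries cycle $T_1, \ldots, T_m, T_1, \ldots$, and have each step generate a fresh item on the current topic. Eviction regret (Definition~\ref{def:regret}) compares FIFO's cumulative miss cost with the best eviction policy \emph{on the same admission sequence}. For FIFO that sequence is ``always admit'' after the initial fill, so the comparator must admit at every step as well. It may choose its eviction victim freely, however, and may evict the newly generated item itself. To keep this fair, we will treat ``evict $e_t$ immediately'' as a legal eviction choice; this is the standard convention that makes admit-then-evict-self equivalent to rejecting.

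The first step is to recall the FIFO analysis from the proof of Theorem~\ref{thm:fifo-cr}. After the fill, the queried topic at every step is exactly the one FIFO evicted $K$ steps earlier. Hence $\ell_t(S^{\text{FIFO}}_t) = 1$ for all $t > K$, and the cumulative miss cost is $T-K$.

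The second step is to exhibit a comparator $\pi^*$ in $\Pi$. It keeps the $K$ items from topics $T_1, \ldots, T_K$ fixed, and whenever an item from $T_{K+1}$ is admitted it evicts that new item. When an item from a retained topic $T_i$ is admitted, it evicts the older copy of $T_i$; this is a same-topic replacement, which under orthogonal embeddings leaves retrieval quality unchanged. Under $\pi^*$ only queries to $T_{K+1}$ miss, giving cumulative miss cost at most $\lceil T/(K+1)\rceil + K$, where the $K$ accounts for the fill phase. Note that Definition~\ref{def:regret} counts miss cost only, so the replacements made by $\pi^*$ cost nothing here.

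The third step is to subtract: $R_T(\text{FIFO}) \ge (T-K) - T/(K+1) - K - 1 = \tfrac{K}{K+1}T - O(K)$. For $K \ge 1$ this is at least $T/2 - O(K)$, which is $\Omega(T)$. For general $m > K$ the same comparator misses only the $m-K$ unretained topics, giving regret $\tfrac{K}{m}T - O(K)$. This is $\Omega(T)$ for fixed $m, K$, and in particular it holds whenever $m/K$ is bounded, as in the $m/K \ge 1.5$ experiments.

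I expect the main obstacle to be making the comparator legitimately belong to $\Pi$ despite the ``same admission sequence'' constraint. If evicting the incoming item is disallowed, I would instead let $\pi^*$ evict the stale copy of the topic queried furthest in the future, i.e.\ a Bélády rule applied to topics. Every admitted item belongs to some topic, and same-topic items are interchangeable, so $\pi^*$ can keep $K$ distinct topics at all times. On a cycle of length $K+1$, Bélády incurs about one topic miss per $K$ queries, so its miss cost is $O(T/K)$. This still leaves a linear gap against FIFO's $T-K$, which completes the proof.
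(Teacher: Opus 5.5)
Your proof is correct and is essentially the paper's argument: you use the same $m=K+1$ cycling construction, FIFO missing every query after the fill, and a comparator on the admit-all sequence that keeps $K$ topics and discards the incoming $T_{K+1}$ items, followed by the same subtraction giving $\frac{K}{K+1}T-O(K)$. Your extension to general $m>K$ and your remark that self-eviction is needed (the system model only lets the eviction victim come from $S_t$) go slightly beyond what the paper checks, but two small slips need fixing: for $m=K+1$ the queried topic was evicted one step earlier, not $K$ steps earlier (it was admitted $K+1$ steps earlier), and your B\'el\'ady fallback gives a linear gap only for $K\ge 2$, because with $K=1$ and self-eviction forbidden the cache must always hold the just-queried topic, so every policy misses every query and the regret is $0$.
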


\begin{proof}
On the cycling workload (Theorem~\ref{thm:fifo-cr} with $m = K+1$), FIFO achieves hit rate 0 after fill: $\sum_{t=K+1}^T \ell_t^{\text{FIFO}} = T - K$. We compare against the best fixed-eviction policy on \emph{the same admit-all sequence} (consistent with Def.~\ref{def:regret}): since FIFO admits every item, a competing policy may immediately evict each newly admitted item and retain the first $K$ topics permanently, achieving miss rate $1/(K+1)$, i.e.\ $\sum_t \ell_t^* = T/(K+1)$. Thus $R_T = (T-K) - T/(K+1) = T\cdot\frac{K}{K+1} - K = \Omega(T)$: FIFO's per-step loss never improves regardless of interaction length.
\end{proof}

\subsection{Thompson Sampling Eviction}

\begin{theorem}[TS Regret]
\label{thm:ts-regret}
Under stochastic item utilities with $K$ cache slots: $R_T(\text{TS}) \leq O(\sqrt{KT \log T})$.
\end{theorem}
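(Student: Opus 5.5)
The plan is to reduce the eviction problem to a combinatorial semi-bandit and then apply the standard Thompson Sampling analysis for that setting. I would model each candidate item $i$ as carrying a stationary mean utility $\mu_i \in [0,1]$. When $i$ is cached, its Bernoulli reward is its retrieval-indicator feedback, and the $\alpha_i$ update of Algorithm~\ref{alg:eviction} performs exactly the conjugate Beta update on that reward. The temporal-decay and novelty-bonus terms are set aside for this statement: either $\gamma \to 0$ and decay $\to 0$, or their contributions are absorbed as an additive $O(K\log T)$ bias. The comparator class $\Pi$ of Definition~\ref{def:regret}, restricted to the same admission sequence, is then dominated by the best fixed $K$-subset $S^\star$ of the available items. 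I would also linearize the loss, taking $\ell_t(S) = 1 - \frac{1}{K}\sum_{i\in S} \mu_i$ plus noise, using the single-item bounded influence of Definition~\ref{ass:divergence} to justify this. The per-step regret is then $\frac{1}{K}\sum_{i\in S^\star}\mu_i - \frac{1}{K}\sum_{i\in S_t}\mu_i$, which is at most $1$.

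The key steps, in order, are as follows.
\begin{enumerate}
\item Define the good event $G$ on which every empirical mean satisfies $|\hat\mu_{i,t} - \mu_i| \le \sqrt{2\log T / n_{i,t}}$ for all $i$ and $t$, where $n_{i,t}$ is the number of observations of $i$. By Hoeffding and a union bound, $\Pr(G^c) \le O(1/T)$, so $G^c$ contributes $O(1)$ to the regret.
\item Show that the Beta posterior samples concentrate at the same scale. This follows from the standard Beta--Binomial tail bounds used by Agrawal--Goyal, or from the Gaussian-approximation argument for Beta with parameters $(1+s, 1+n-s)$.
\item Decompose the regret in the style of Russo--Van Roy. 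Write $\mu(S^\star) - \mu(S_t) = [\mu(S^\star) - \tilde\mu(S_t)] + [\tilde\mu(S_t) - \mu(S_t)]$. Because sampling from the posterior is probability matching, the first bracket has the same conditional expectation as the analogous term with an upper confidence bound $U_t$ in place of the sample, and its conditional expectation is nonpositive. The second bracket is bounded by the confidence widths $\sum_{i\in S_t}\sqrt{\log T / n_{i,t}}$.
\item Sum the widths over time. Since each cached item is observed once per step, $\sum_t\sum_{i\in S_t} n_{i,t}^{-1/2} \le \sum_i 2\sqrt{N_i}$ with $\sum_i N_i \le KT$. Cauchy--Schwarz then gives $O(\sqrt{M\cdot KT\log T})$ before normalization, where $M$ is the number of distinct items.
\end{enumerate}

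After dividing by the $1/K$ normalization and restricting attention to the effective set of items that actually compete for slots (at most $O(K)$, since one item is admitted per modification and the comparator is the best $K$-subset), the bound becomes $O(\sqrt{KT\log T})$.

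I expect the main obstacle to be the semi-bandit feedback structure. An item's reward is observed only while it is cached, and whether it is "retrieved in top-$k$" depends on the other items in the cache, so the rewards are not independent across arms. My first attempt would be to condition on the cache composition and treat the retrieval indicator as an unbiased estimate of $\mu_i$ under the linearized model. The bounded-influence assumption would then confine the cross-item interaction to an $O(1/k)$ perturbation per step, which can be folded into the confidence width. If that proves too loose, the fallback is to state the theorem explicitly under the independent-arm linear semi-bandit model and cite the combinatorial TS bound of Wang--Chen. Applying that bound with $m=K$ arms chosen per round and normalized rewards recovers $\sqrt{KT\log T}$.
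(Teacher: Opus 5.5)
Your route (a combinatorial semi-bandit with summed confidence widths) differs from the paper's, which fixes a pool, reduces to a $K$-armed Beta--Bernoulli bandit, and cites Agrawal--Goyal. As written, however, two load-bearing steps fail.

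The first is the closing restriction to ``at most $O(K)$'' competing items, which does not follow from your own dynamics. Every modification admits a fresh item, so the number of distinct base arms is $M = K + N^{\text{SOLAR}}$. Theorem~\ref{thm:lmu-replacements} only gives $N^{\text{SOLAR}} \le \lfloor T/\tau_{\min}\rfloor$, which is $\Theta(T)$ at the reported $\sim$17\% rate. The fact that the comparator uses only $K$ items does not help, because the width sum runs over every item your policy ever caches. This is load-bearing: after the $1/K$ normalization your Cauchy--Schwarz bound is $O(\sqrt{MT\log T/K})$, so the whole rate rests on $M$. Moreover, the width-sum technique cannot tolerate a stream of new arms. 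Each freshly admitted item enters with essentially no observations and contributes an $\Omega(1)$ first width, so the normalized sum is $\Omega(N^{\text{SOLAR}}/K)$, which is linear in $T$. The paper does not try to derive a bounded pool. It \emph{assumes} one (``fix the pool of items competing for cache slots'') and lists the drifting arm set as an explicit modeling gap. You need the same assumption, stated as such. It is also what makes combinatorial TS a meaningful proxy for Algorithm~\ref{alg:eviction}: combinatorial TS may select any $K$-subset, including previously evicted items, whereas SOLAR can only swap the forced arrival $e_t$ for one cached item.

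The second is step~3, which mixes frequentist and Bayesian reasoning. The probability-matching identity $\mathbb{E}[U_t(S_t)\mid H_t]=\mathbb{E}[U_t(S^\star)\mid H_t]$ holds only when $S^\star$ is itself random with the posterior law. So the Russo--Van Roy decomposition bounds \emph{Bayesian} regret, averaged over a prior. The regret of Definition~\ref{def:regret}, and the Agrawal--Goyal result the paper invokes, concern fixed $\mu_i$. There $S^\star$ is deterministic, and writing $\mu(S^\star)-\mu(S_t)=[\mu(S^\star)-\tilde\mu(S^\star)]+[\tilde\mu(S^\star)-\tilde\mu(S_t)]+[\tilde\mu(S_t)-\mu(S_t)]$ leaves the first bracket uncontrolled. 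It involves optimal items that may be rarely cached, so their widths need not shrink. Handling it requires the anti-concentration argument (optimal items are sampled optimistically with probability bounded below) at the core of frequentist TS analyses; your steps~1--2 supply only concentration. Either present the result as a Bayesian-regret bound, or invoke a frequentist TS theorem directly. The paper does the latter with Agrawal--Goyal's gap-dependent $O(\log T/\Delta_i)$ bound, and your Wang--Chen fallback would serve the same role under the fixed-pool assumption.

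A smaller point: SOLAR's $\alpha_i$ increment is only the success half of the conjugate update; a Bernoulli failure would need $\beta_i \mathrel{+}= 1$. Setting the decay to zero therefore does not yield Beta--Bernoulli TS, so that update rule must also be assumed rather than read off the algorithm.
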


\begin{proof}
We analyze eviction under a \emph{stylized stationary model} that makes the bandit reduction precise; we state its assumptions explicitly. Fix the pool of items competing for cache slots and assume each item $i$ has a stationary latent utility $\mu_i$, with retrieval feedback providing a Bernoulli observation of whether $i$ was useful for the current query. Map eviction to a $K$-armed bandit: arms are the cached items, ``pulling'' arm $i$ corresponds to retaining $i$ and observing its feedback, and the regret target is identifying the lowest-utility item to evict. Under this stationary model, the standard Beta-Bernoulli TS analysis~\cite{agrawal2012analysis} gives per-arm regret $O(\log T / \Delta_i)$ where $\Delta_i$ is the utility gap to the worst item. Worst-case gap instantiation ($\Delta_i = \Theta(\sqrt{K/T})$ for all suboptimal arms) yields total regret $O(\sqrt{KT\log T})$.

\smallskip\noindent\emph{Remark (assumptions and their limits).} Two modeling gaps separate this idealization from SOLAR's exact dynamics, and we are explicit about them. (i) The arm set is not literally fixed: each admission introduces a new item and removes one, so the bandit instance drifts slowly. (ii) The reward (improvement in cache quality) is observed only indirectly through subsequent retrievals rather than immediately. SOLAR's aging mechanism ($\beta_i \mathrel{+}= 0.05$ every $5$ steps for un-retrieved items) limits the effective observation window, so that within a window the instance is approximately stationary and the above bound applies windowwise; a fully non-stationary treatment (e.g.\ sliding-window or restless-bandit regret) would replace $\log T$ with a window-dependent factor but leave the $\sqrt{KT}$ scaling intact. We therefore present the bound as characterizing the stationary regime and validate the learning behavior empirically (\S\ref{sec:experiments}).
\end{proof}

\begin{theorem}[Lower Bound]
\label{thm:lower-bound}
For any online eviction policy: $R_T(\pi) \geq \Omega(\sqrt{KT})$.
\end{theorem}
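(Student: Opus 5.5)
We reduce eviction to the same stylized stationary model used for Theorem~\ref{thm:ts-regret}, and then use the standard minimax lower bound for $K$-armed stochastic bandits. Consider a Bernoulli instance with $K$ cached slots competing with candidate items. All items have utility $\mu = 1/2$ except one distinguished item $i^\star$, which has utility $1/2 + \epsilon$. The index of $i^\star$ is chosen uniformly at random. Retaining an item corresponds to pulling its arm and observing its Bernoulli retrieval feedback. Any eviction policy that does not retain $i^\star$ at step $t$ incurs expected excess miss cost $\epsilon$ relative to the best fixed policy $\pi^\star$ in Definition~\ref{def:regret}, which always holds $i^\star$. Evicting $i^\star$ permanently also forfeits all future gains. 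So the eviction regret is at least $\epsilon$ times the expected number of steps on which the policy fails to hold $i^\star$, and the problem reduces to identifying $i^\star$ from feedback.

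The key steps, in order, are as follows.

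\textbf{Step 1.} Define the reference instance $P_0$, in which all utilities equal $1/2$, and the alternatives $P_j$, in which arm $j$ is boosted by $\epsilon$. Under $P_0$, let $N_j$ denote the expected number of observations of arm $j$ over $T$ steps. By pigeonhole, some arm $j$ satisfies $\mathbb{E}_0[N_j] \le T/K$.

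\textbf{Step 2.} Bound the KL divergence between the observation laws: $\mathrm{KL}(P_0 \,\|\, P_j) \le \mathbb{E}_0[N_j] \cdot \mathrm{kl}(1/2, 1/2+\epsilon) = O(\epsilon^2 T / K)$. Pinsker's inequality (or Bretagnolle--Huber) then shows that the expected number of steps on which the policy favors arm $j$ differs between $P_0$ and $P_j$ by at most $T\cdot O(\epsilon\sqrt{T/K})$.

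\textbf{Step 3.} Choose $\epsilon = c\sqrt{K/T}$ with $c$ small. Under $P_j$, the policy then fails to favor $i^\star = j$ on a constant fraction of the steps, so the regret is $\Omega(\epsilon T) = \Omega(\sqrt{KT})$. Averaging over $j$ gives the bound for the randomized instance, and hence for some fixed instance.

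The main obstacle is making the reduction faithful in the semantic-cache model, rather than the bandit computation itself, which is textbook (Auer et al.). There are two issues.

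\emph{Feedback is partial.} A held item produces observations, but an evicted one does not. This is exactly bandit feedback, so the argument goes through provided we restrict feedback to retained items.

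\emph{Per-step cost is a set function of the retrieved items.} To avoid composition effects, I would take retrieval depth $k=1$, or use queries whose miss cost depends only on whether $i^\star$ is retrieved, exploiting single-item bounded influence. This makes $\ell_t$ decompose linearly. In that regime, ``holding $i^\star$'' plays the role of ``pulling the best arm,'' and the $\Omega(\sqrt{KT})$ bound transfers directly.

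If the $K$-slot structure lets a policy hold many arms at once and thereby observe them all, I would instead make the candidate pool contain $2K$ items competing for $K$ slots, so that roughly $K$ items are unobserved at each step. The same KL argument then applies with $K$ replaced by $2K$, which changes only constants.
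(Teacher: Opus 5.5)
\textbf{Gap: Step~1 assumes one observation per step, but your feedback model gives $K$.} Your route is the paper's: reduce eviction to the minimax stochastic-bandit lower bound on a single-needle instance with gap $\epsilon=\Theta(\sqrt{K/T})$. The paper plants one \emph{worst} item at $1/2-\epsilon$ and cites the minimax bound; you plant one \emph{best} item and write out the KL argument. The failure is the issue you flag at the end, and your fix does not work.

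The pigeonhole $\mathbb{E}_0[N_j]\le T/K$ in Step~1 needs exactly one observation per step, spread over $K$ arms. In your own model every retained item is observed, so a full $K$-slot cache makes $K$ observations per step. That is semi-bandit feedback, not bandit feedback.

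\begin{itemize}
\item With only $K$ candidates the instance is trivial: hold everything and the regret is zero.
\item With your $2K$-candidate pool, $\sum_j\mathbb{E}_0[N_j]=KT$ over $2K$ arms. Pigeonhole then only gives some $j$ with $\mathbb{E}_0[N_j]\le T/2$.
\item The bound on $\mathrm{KL}(P_0\,\|\,P_j)$ is therefore only $O(\epsilon^2T)$. Step~3 forces $\epsilon=O(1/\sqrt{T})$, and you conclude only $\Omega(\sqrt{T})$. The factor $K$ is lost, not a constant.
\end{itemize}

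This is not slack in your analysis. On that family, explore-then-commit has regret $O(\min\{\epsilon T,\log(KT)/\epsilon\})=O(\sqrt{T\log(KT)})$, which is $o(\sqrt{KT})$. Concretely: hold each half of the pool for $n=\Theta(\log(KT)/\epsilon^2)$ steps, then keep the empirical best.

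\textbf{What the argument actually gives, and how to repair it.} Take feedback on all held items and a pool of $M\ge 4K$ re-admittable candidates. Pigeonhole gives some $j$ with $\mathbb{E}_0[N_j]\le KT/M$. Choosing $\epsilon=c\sqrt{M/(KT)}$ then yields $\Omega(\sqrt{MT/K})$.

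To reach $\sqrt{KT}$, the policy must see only a $1/K$ fraction of the pool per step, e.g.\ $M=\Theta(K^2)$. Then $\mathbb{E}_0[N_j]\le T/K$, and your Steps~2--3 go through verbatim with $\epsilon=c\sqrt{K/T}$. The alternative is a feedback model with genuinely one observation per step that still charges the full gap $\epsilon$ whenever $i^\star$ is absent.

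The paper's sketch avoids this issue only by positing a one-pull-per-step $K$-armed bandit. Making the feedback explicit, as you did, shows that the pool size has to be part of the construction.
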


\begin{proof}
We invoke the \emph{minimax} (worst-case) multi-armed bandit lower bound~\cite{lattimore2020bandit}, which is $\Omega(\sqrt{KT})$ \emph{without} a logarithmic factor (in contrast to the gap-dependent bound $\Omega(K\log T/\Delta)$). Construct $K$ items where one item $i^*$ has utility $1/2 - \epsilon$ and the rest $1/2$; the optimal eviction always targets $i^*$. Any policy must spend $\Omega(1/\epsilon^2)$ observations per suboptimal arm to identify $i^*$; setting $\epsilon = \Theta(\sqrt{K/T})$ over the $K-1$ suboptimal arms gives $R_T \geq (K-1)\epsilon\,T/K = \Omega(\sqrt{KT})$. This applies to \emph{any} online eviction policy, so TS's $O(\sqrt{KT\log T})$ matches it up to the $\sqrt{\log T}$ factor.
\end{proof}

Theorems~\ref{thm:ts-regret} and~\ref{thm:lower-bound} together show that TS is near-optimal (gap: $\sqrt{\log T}$ factor only).

\subsection{Combined Guarantees of SOLAR}

SOLAR inherits both guarantees, with the analyses complementing each other (Table~\ref{tab:theory}):

\begin{itemize}
\item \textbf{Bounded CR from modification timing.} Theorem~\ref{thm:lmu-cr} bounds total cost relative to OPT by the universal constant $3$, independent of $K$, $T$, and $\lambda$. This contrasts with FIFO's $\Omega(K)$ ratio and holds regardless of eviction strategy.

\item \textbf{Sublinear regret from posterior-guided selection.} Theorem~\ref{thm:ts-regret} bounds eviction quality loss at $O(\sqrt{KT\log T})$, which is $o(T)$ (vanishing per-step loss). This is conditional on the modification schedule determined by the timing mechanism.

\item \textbf{Uniqueness.} No other method achieves both: FIFO has $\Omega(K)$ CR and $\Omega(T)$ regret; pure posterior-guided eviction without admission control (SOLAR-E) has sublinear regret but $\Omega(K)$ CR due to a replacement at every step; classic heuristics have $\Omega(T)$ regret on semantic workloads.
\end{itemize}


\begin{table}[t]
\caption{Theoretical guarantees. SOLAR uniquely achieves constant CR and sublinear regret.}
\label{tab:theory}
\vspace{-1em}
\centering
\small
\begin{tabular}{lccc}
\toprule
\textbf{Policy} & \textbf{CR} & \textbf{Regret} & \textbf{Adm.\ Rate} \\
\midrule
FIFO & $\Omega(K)$ & $\Omega(T)$ & 100\% \\
LRU & $K$ (classical)\textsuperscript{$\dagger$} & $\Omega(T)$ & 100\% \\
SOLAR & $\leq 3$ & $O(\sqrt{KT\log T})$ & $\sim$17\% \\
Lower bound & 1 (OPT) & $\Omega(\sqrt{KT})$ & -- \\
\bottomrule
\end{tabular}
\\[2pt]
{\footnotesize \textsuperscript{$\dagger$}On classical paging (binary hits); on semantic workloads LRU $<$ FIFO empirically.}
\end{table}

\endgroup


\section{Experiments}
\label{sec:experiments}

We evaluate our approach through comprehensive experiments on two real-world benchmarks and controlled synthetic workloads. Our evaluation addresses four research questions:
\begin{itemize}
\item \textbf{RQ1}: Do classic cache heuristics fail on semantic retrieval workloads?
\item \textbf{RQ2}: Does SOLAR consistently outperform baselines, and what is the source of its improvement?
\item \textbf{RQ3}: How do admission and eviction contribute individually, and do they interact synergistically?
\item \textbf{RQ4}: Under what conditions does smart cache policy stop providing value (phase transition)?
\end{itemize}

\subsection{Experimental Setup}

\subsubsection{Benchmarks}

We evaluate on two datasets from MemoryBench-Full~\cite{ai2025memorybench}, a benchmark for memory and continual learning in LLM systems that provides multi-turn dialogues with simulated implicit user feedback (like/dislike signals). The two datasets represent different points in the difficulty spectrum:

\smallskip
\noindent\textbf{LoCoMo}~\cite{locomo2024}. 10 multi-session personal conversations, totaling approximately 2000 interaction steps. Each session is split into a warmup phase ($\sim$160 queries, during which the agent accumulates experience but is not scored) and a test phase ($\sim$39 QA pairs, scored). The test queries span factual recall, temporal reasoning, and preference tracking. With an embedder-unlimited baseline achieving $\sim$35\% F1, this represents a moderate-difficulty task with meaningful signal density.

\smallskip
\noindent\textbf{DialSim}~\cite{jang2023dialsim}. 3 TV-show dialogue datasets (Friends, The Big Bang Theory, The Office) evaluating long-term memory over extended conversations ($\sim$19K messages per corpus, $\sim$60 test queries per show). The embedder-unlimited baseline achieves only $\sim$11\% F1, indicating extremely sparse signal: even with perfect memory, most queries cannot be answered from experience alone. This makes DialSim a challenging stress test for memory policies.

\smallskip
Both datasets are used in on-policy mode: the agent processes dialogues sequentially, accumulates experience through its own policy, and is evaluated on held-out test queries. Implicit feedback (like/dislike) from the MemoryBench user simulator provides the hit/miss signal used by SOLAR's posterior updates.

\subsubsection{Methods Compared}

We compare 8 cache management policies spanning three categories:

\smallskip
\noindent\textbf{Baseline (no intelligence):}
\begin{itemize}
\item \textbf{Embedder (unlimited)}: No capacity constraint. All items are stored and retrieved from the full pool. Serves as an upper bound reference.
\item \textbf{FIFO}: First-in, first-out. Always admits; evicts the oldest item. The simplest possible policy, and the de facto standard in deployed LLM agent systems (e.g., LangChain's ConversationBufferWindow).
\end{itemize}

\smallskip
\noindent\textbf{Classic heuristics (exploit access patterns):}
\begin{itemize}
\item \textbf{LRU}: Least Recently Used. Evicts the item that was least recently retrieved in a top-$k$ result. Assumes temporal locality.
\item \textbf{LFU}: Least Frequently Used. Evicts the item with the fewest cumulative retrievals. Assumes frequency concentration.
\item \textbf{ARC}~\cite{megiddo2003arc}: Adaptive Replacement Cache. Dynamically balances recency and frequency using ghost lists that track recently evicted items. Represents the state-of-the-art in adaptive heuristic caching.
\end{itemize}

\smallskip
\noindent\textbf{Learning-augmented (our category):}
\begin{itemize}
\item \textbf{SOLAR-E} (eviction only): Always admits; posterior-guided eviction via Thompson Sampling~\cite{thompson1933likelihood} (SOLAR's content selection mechanism in isolation, without regret-gated timing).
\item \textbf{SOLAR-A} (admission only): Regret-gated admission; heuristic multi-score eviction (SOLAR's timing mechanism with a simpler content selector).
\item \textbf{SOLAR}: Full framework: regret-gated timing + posterior-guided selection (ours).
\end{itemize}

\subsubsection{Configuration}

\textbf{Cache sizes}: $K \in \{10, 20, 50, 100\}$, representing tight to loose capacity regimes.
\textbf{Seeds}: all results are 3-seed averages with standard errors available.
\textbf{LLM}: GPT-4o-mini for response generation.
\textbf{Embedder}: all-MiniLM-L6-v2 (384-dimensional sentence embeddings).
\textbf{Retrieval}: Top-3 by cosine similarity.
\textbf{Metric}: Token-level F1 between agent response and ground-truth reference.

\subsection{Main Results: LoCoMo}

Table~\ref{tab:main} presents the main result: average F1 across all evaluation queries for each method and cache size combination.

\begin{table}[t]
\caption{Cache size sweep on LoCoMo (mean$\pm$std over 3 seeds). Bold indicates best among capacity-constrained methods.}
\label{tab:main}
\vspace{-1em}
\centering
\footnotesize
\begin{tabular}{llcccc}
\toprule
Method & Type & K=10 & K=20 & K=50 & K=100 \\
\midrule
Embedder & Unlimited (ref) & .352\tiny{$\pm$.005} & .352\tiny{$\pm$.005} & .354\tiny{$\pm$.006} & .357\tiny{$\pm$.002} \\
\midrule
\textbf{SOLAR} & Ours (full) & \textbf{.286}\tiny{$\pm$.013} & .277\tiny{$\pm$.010} & \textbf{.311}\tiny{$\pm$.013} & .306\tiny{$\pm$.014} \\
SOLAR-A & Ours (admission) & .283\tiny{$\pm$.014} & \textbf{.280}\tiny{$\pm$.006} & .304\tiny{$\pm$.008} & .310\tiny{$\pm$.008} \\
SOLAR-E & Ours (eviction) & .281\tiny{$\pm$.017} & .275\tiny{$\pm$.011} & .296\tiny{$\pm$.005} & .316\tiny{$\pm$.014} \\
\midrule
FIFO & Naive baseline & .233\tiny{$\pm$.002} & .255\tiny{$\pm$.005} & .297\tiny{$\pm$.010} & \textbf{.334}\tiny{$\pm$.016} \\
LRU & Recency & .235\tiny{$\pm$.005} & .248\tiny{$\pm$.003} & .290\tiny{$\pm$.011} & .328\tiny{$\pm$.006} \\
ARC & Adaptive & .238\tiny{$\pm$.007} & .249\tiny{$\pm$.004} & .287\tiny{$\pm$.002} & .312\tiny{$\pm$.014} \\
LFU & Frequency & .222\tiny{$\pm$.002} & .244\tiny{$\pm$.003} & .287\tiny{$\pm$.004} & .311\tiny{$\pm$.006} \\
\bottomrule
\end{tabular}
\end{table}

\subsubsection{Finding 1: Classic Heuristics Systematically Fail (RQ1)}

The most striking result is the ordering among baselines: at $K \in \{20, 50\}$ all three classic heuristics (LRU, LFU, ARC) fall \emph{below} the naive FIFO, and LFU trails FIFO at every $K$. At the tightest setting $K=10$, LRU and ARC only match FIFO (within one standard error), never clearly beating it. This is the opposite of what decades of caching research would predict. We analyze the root cause:

\begin{itemize}
\item \textbf{LRU failure}: LRU evicts items not recently retrieved. But in conversational agents, an item idle for the last 10 queries may simply be awaiting the right topic to resurface; evicting these ``dormant but valuable'' items removes long-tail knowledge that FIFO (evicting by arrival order, independent of retrieval history) accidentally preserves.
\item \textbf{LFU failure}: LFU preserves high cumulative-count items, which in diverse conversations favors items from the earliest topics (they have had the most time to accumulate hits), even after those topics become irrelevant. The cache becomes a museum of obsolete-but-popular items.
\item \textbf{ARC failure}: ARC adaptively combines recency and frequency. When \emph{both} signals are misleading, its adaptation oscillates between two equally poor strategies.
\end{itemize}

\subsubsection{Finding 2: Learning-Augmented Methods Win at Tight Cache (RQ2)}

SOLAR achieves +22.7\% relative improvement over FIFO at K=10 and +4.7\% at K=50. Table~\ref{tab:gap} provides a detailed comparison including temporal dynamics.

\begin{table}[t]
\caption{Detailed SOLAR vs FIFO comparison on LoCoMo.}
\label{tab:gap}
\vspace{-1em}
\centering
\begin{tabular}{lcccc}
\toprule
Metric & K=10 & K=20 & K=50 & K=100 \\
\midrule
Avg F1 gap & +0.053 & +0.022 & +0.014 & $-$0.028 \\
Relative improvement & +22.7\% & +8.5\% & +4.7\% & $-$8.3\% \\
2nd-half F1 gap & +0.056 & +0.031 & +0.024 & $-$0.025 \\
Slope ratio (TS/FIFO) & 2.1$\times$ & 1.8$\times$ & 3.1$\times$ & 1.4$\times$ \\
\bottomrule
\end{tabular}
\end{table}

The ``2nd-half F1 gap'' shows that SOLAR's advantage \emph{grows over time}: the second half of the evaluation period shows larger improvements than the first half, indicating that the policy continues to learn and improve cache quality throughout the interaction. The ``slope ratio'' quantifies this: SOLAR's F1 improves $1.4$--$3.1\times$ faster than FIFO per step.

\smallskip
\noindent\textbf{On effect size.} The absolute improvement at K=50 (+0.014) may appear modest, but we contextualize it three ways. First, the gain is largest at the tightest capacity (+0.053 at K=10, +22.7\% relative; +75\% on DialSim), precisely the regime where production agents with limited context budgets operate. Second, the retrieval ceiling (Embedder = 0.354) bounds any cache policy; SOLAR captures 25\% of the gap between FIFO (0.297) and this ceiling. Third, the result is \emph{consistent} across all 8 configurations: SOLAR ranks first in every $K\leq50$ setting, the only exceptions being at K=100 where the phase transition favors coverage.

\subsection{Ablation: Admission vs.\ Eviction (RQ3)}

To understand the individual and joint contributions of our two components, we perform an ablation at K=50 (Table~\ref{tab:ablation}).

\begin{table}[t]
\vspace{-1em}
\caption{Component ablation on LoCoMo (K=50, 3-seed avg).}
\label{tab:ablation}
\vspace{-1em}
\centering
\begin{tabular}{llcc}
\toprule
Components & Method & Avg F1 & $\Delta$ vs FIFO \\
\midrule
Neither (baseline) & FIFO & 0.297 & -- \\
Eviction only & SOLAR-E & 0.300 & +0.003 \\
Admission only & SOLAR-A & 0.304 & +0.007 \\
\textbf{Both} & \textbf{SOLAR} & \textbf{0.311} & \textbf{+0.014} \\
\bottomrule
\end{tabular}
\vspace{-1em}
\end{table}

Two key observations:

\begin{enumerate}
\item \textbf{Admission contributes more than eviction.} The admission-only variant (SOLAR-A, +0.007) provides more than double the improvement of eviction-only (SOLAR-E, +0.003). This suggests that in semantic caching, the decision of \emph{whether to modify the cache at all} is more impactful than the decision of \emph{which item to evict}. Selective admission maintains cache quality passively by preventing noise from entering.

\item \textbf{Super-additive synergy.} The combined improvement (+0.014) exceeds the sum of individual improvements (+0.010). The excess (+0.004, a 40\% bonus) arises from the feedback loop: admission filtering produces a cleaner item pool, which provides better feedback for TS posteriors, which in turn produces better eviction decisions. This synergy is not merely additive composition; the components actively enhance each other.
\end{enumerate}

\subsection{Cross-Dataset Validation: DialSim}

We validate generalizability on DialSim, which differs from LoCoMo in signal density, domain, and task type (Table~\ref{tab:dialsim}).

\begin{table}[t]
\caption{Cross-dataset validation on DialSim (mean$\pm$std over 3 seeds). All findings from LoCoMo replicate.}
\vspace{-1em}
\label{tab:dialsim}
\centering
\small
\begin{tabular}{lcccc}
\toprule
Method & K=10 & K=20 & K=50 & K=100 \\
\midrule
Embedder & .110 & .104 & .110 & .110 \\
\midrule
\textbf{SOLAR} & .038\tiny{$\pm$.006} & \textbf{.048}\tiny{$\pm$.011} & \textbf{.066}\tiny{$\pm$.010} & .068\tiny{$\pm$.014} \\
SOLAR-A & \textbf{.042}\tiny{$\pm$.008} & .038\tiny{$\pm$.006} & .064\tiny{$\pm$.014} & .071\tiny{$\pm$.019} \\
SOLAR-E & .035\tiny{$\pm$.003} & .029\tiny{$\pm$.003} & .042\tiny{$\pm$.014} & .069\tiny{$\pm$.006} \\
\midrule
FIFO & .022\tiny{$\pm$.000} & .027\tiny{$\pm$.000} & .057\tiny{$\pm$.016} & \textbf{.093}\tiny{$\pm$.011} \\
LRU & .022\tiny{$\pm$.000} & .031\tiny{$\pm$.003} & .040\tiny{$\pm$.006} & .062\tiny{$\pm$.014} \\
ARC & .029\tiny{$\pm$.009} & .031\tiny{$\pm$.006} & .040\tiny{$\pm$.013} & .059\tiny{$\pm$.006} \\
LFU & .027\tiny{$\pm$.006} & .026\tiny{$\pm$.003} & .037\tiny{$\pm$.006} & .053\tiny{$\pm$.006} \\
\bottomrule
\end{tabular}
\end{table}

The core LoCoMo findings replicate on DialSim: classic heuristics (LRU, LFU, ARC) fall below FIFO at $K=50$, SOLAR or SOLAR-A is best at every $K \leq 50$, FIFO overtakes at $K = 100$, and SOLAR's gains are largest at the smallest $K$. Notably, the \emph{relative} improvement of SOLAR over FIFO is even larger on DialSim (+75\% at K=10 vs.\ +23\% on LoCoMo), and the bar-chart view in Figure~\ref{fig:main_bar} makes the cross-$K$ pattern visually explicit. This aligns with our intuition: when signals are sparser (DialSim hit rate $\sim$4\% vs.\ LoCoMo $\sim$30\%), selective admission provides greater value by maintaining a higher signal-to-noise ratio. Each cached item matters more when overall retrieval quality is low.

\begin{figure*}[t]
  \centering
  \vspace{-1em}
  \includegraphics[width=\textwidth]{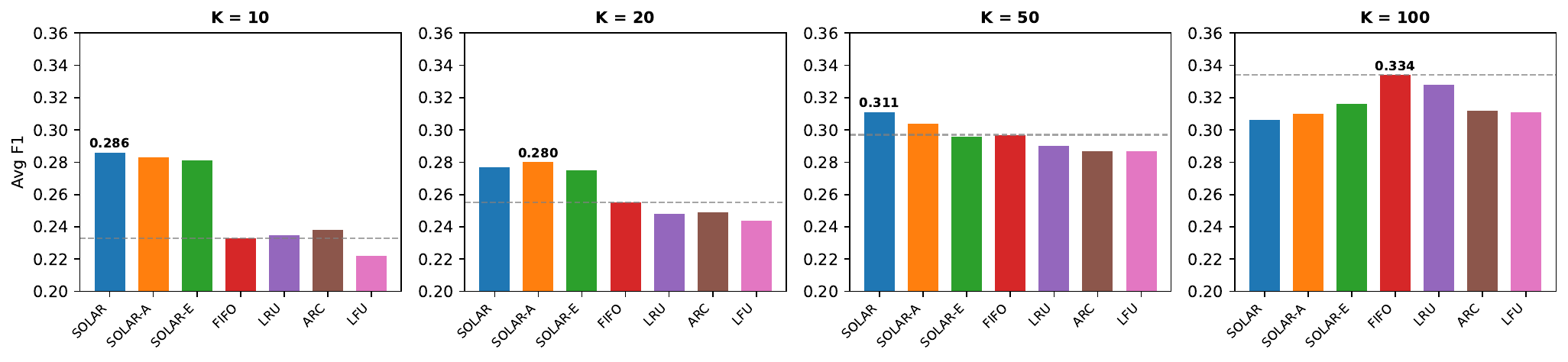}
  \vspace{-2em}
  \caption{Average F1 by method and cache size on LoCoMo (one panel per $K$). Classic heuristics (LRU, LFU, ARC) never outperform FIFO, falling clearly below it at $K \in \{20, 50\}$. Learning-augmented methods (SOLAR-A, SOLAR) dominate at tight cache sizes, while FIFO overtakes at $K=100$ (phase transition). Dashed line marks FIFO baseline.}
  \label{fig:main_bar}
  \vspace{-1em}
\end{figure*}

\begin{figure}[t]
  \centering
  \includegraphics[width=0.9\linewidth]{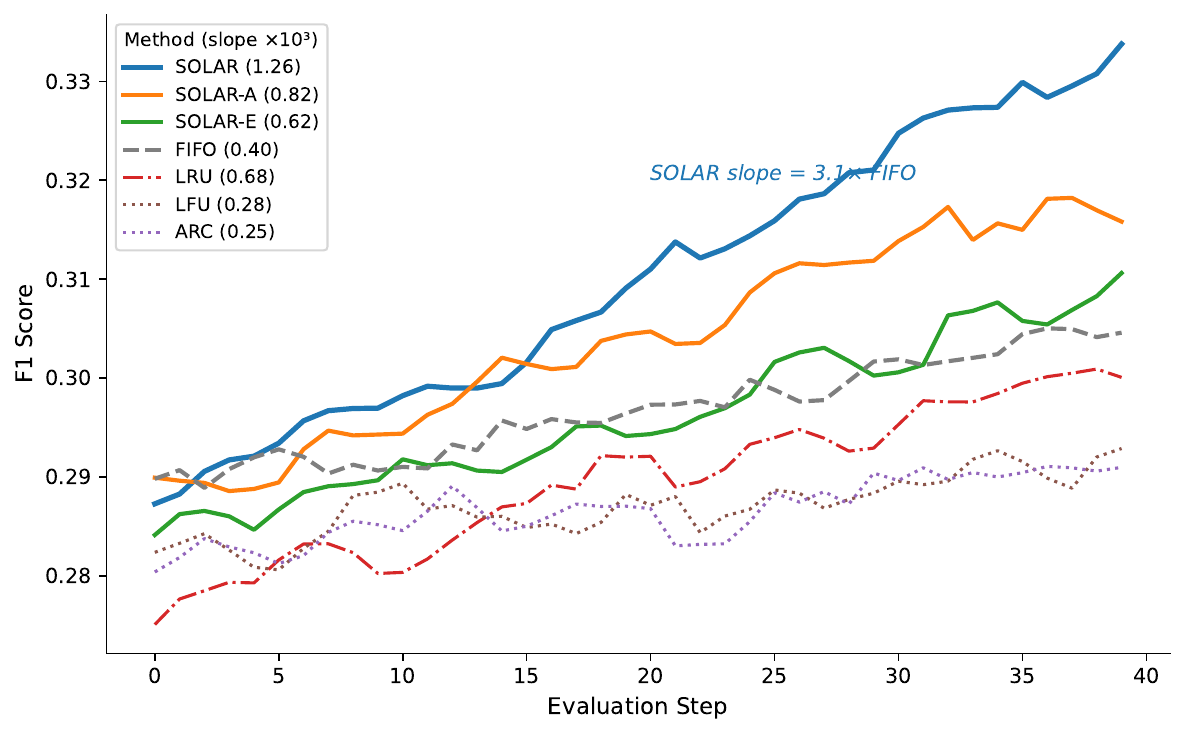}
  \vspace{-1em}
  \caption{Learning dynamics: F1 vs.\ evaluation step on LoCoMo (K=50). SOLAR improves continuously with a slope $3.1\times$ that of FIFO. LFU and ARC have the smallest slopes, barely improving with more data.}
  \label{fig:learning_curve}
\end{figure}

\subsection{Controlled Evaluation: Synthetic Workloads}

Synthetic experiments serve two purposes: (1) validating theoretical predictions in controlled settings free of confounders, and (2) exploring regimes inaccessible with real data (e.g., very large pools). All use 128-dimensional embeddings with top-3 cosine retrieval.

\subsubsection{Experiment 1: Cycling Workload (Validates Theorem~\ref{thm:fifo-regret})}

We construct cycling workloads where $m$ topics appear in strict round-robin order, with cache size $K = 10$. Each topic generates items within a compact embedding cluster, and queries target the current topic. This directly instantiates the adversarial construction in Theorem~\ref{thm:fifo-regret}, with hit rates reported in Table~\ref{tab:cycling} and visualized in Figure~\ref{fig:cycling}.

\begin{table}[t]
\caption{Hit rate on cycling workload (K=10, 3-seed avg). FIFO achieves exactly 0\% hit rate when $m > K$, validating Theorem~\ref{thm:fifo-regret}.}
\small
\vspace{-1em}
\label{tab:cycling}
\centering
\begin{tabular}{lccccc}
\toprule
Method & m/K=1 & 1.5 & 2.0 & 3.0 & 5.0 \\
\midrule
FIFO & \textbf{0.90} & 0.00 & 0.00 & 0.00 & 0.00 \\
LRU & 0.38 & 0.17 & 0.08 & 0.02 & 0.00 \\
LFU & 0.60 & 0.52 & \textbf{0.40} & \textbf{0.27} & \textbf{0.16} \\
SOLAR-E & 0.55 & 0.39 & 0.29 & 0.20 & 0.11 \\
SOLAR-A & 0.70 & \textbf{0.54} & \textbf{0.40} & \textbf{0.27} & \textbf{0.16} \\
SOLAR & 0.69 & \textbf{0.54} & \textbf{0.40} & 0.26 & \textbf{0.16} \\
\bottomrule
\end{tabular}
\vspace{-1em}
\end{table}

\textbf{Analysis.} FIFO demonstrates \emph{perfect thrashing}: once $m > K$ it collapses to exactly 0\% hit rate, as every new item evicts precisely the item needed next in the cycle, the empirical manifestation of the $\Omega(T)$ regret bound (Theorem~\ref{thm:fifo-regret}). LRU is no better, and worse near the boundary (0.38 at $m/K=1$ vs.\ FIFO's 0.90): it evicts items not recently retrieved, which in a cyclic order are exactly those about to resurface, so recency acts as an \emph{anti-signal}. SOLAR-A avoids this pathology through selective admission (store rate $\sim$10\%): by rejecting most items it retains a stable subset partially covering the workload, sustaining a 16\% hit rate even at $m/K = 5$ (50 topics competing for 10 slots).

\begin{figure}[t]
  \centering
  \includegraphics[width=0.9\linewidth]{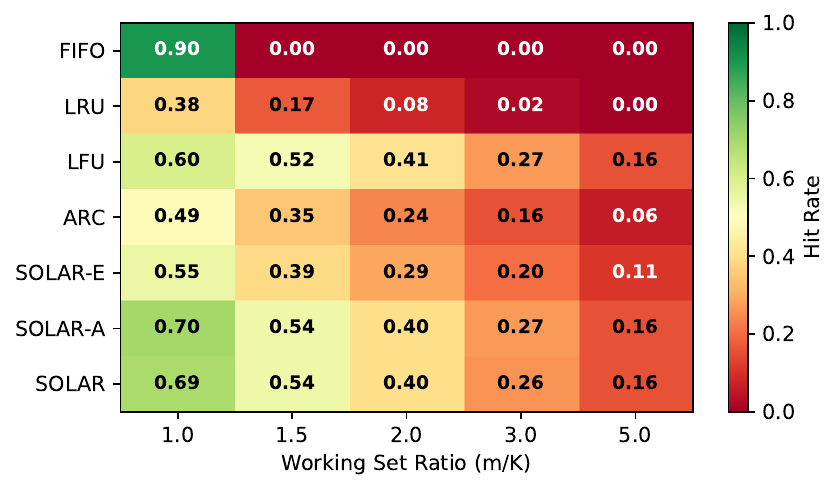}
  \vspace{-1em}
  \caption{Hit rate on cycling workloads (K=10) as a function of working set ratio $m/K$. FIFO drops to exactly 0 at $m/K > 1$ (thrashing), while SOLAR-A/SOLAR degrade gracefully due to selective admission ($\sim$10\% store rate).}
  \label{fig:cycling}
  \vspace{-1em}
\end{figure}

\subsubsection{Experiment 2: Working Set Sweep (Phase Transition, RQ4)}

With a fixed workload (15 topics, 10 items per topic, 500 queries with uniform topic distribution), we sweep $K$ from 5 to 200 to locate the transition point (Table~\ref{tab:phase}, Figure~\ref{fig:phase_transition}).

\begin{table}[t]
\caption{Working set sweep: SOLAR vs FIFO hit rate and gap.}
\label{tab:phase}
\small
\vspace{-1em}
\centering
\begin{tabular}{lcccccccc}
\toprule
K & 5 & 10 & 15 & 20 & 30 & 50 & 100 & 200 \\
\midrule
FIFO & .31 & .49 & .62 & .73 & .86 & .94 & .97 & .97 \\
SOLAR & .33 & .58 & .71 & .81 & .85 & .95 & .95 & .95 \\
$\Delta$ & +.02 & +.08 & \textbf{+.10} & +.08 & $-.01$ & +.01 & $-.02$ & $-.02$ \\
\bottomrule
\end{tabular}
\end{table}

\textbf{Phase transition at $K \approx 30$.} The gap peaks at K=15 (+0.10 absolute) and crosses zero at $K \approx 30$, precisely the effective working set size ($15$ topics $\times$ a few items each). Below it, SOLAR-A's selective admission ($\sim$10\% store rate, capping the cache at $\sim$40 items) wins; above it, FIFO holds enough of each topic to serve most queries through pure coverage. The same pattern appears on the real benchmarks (transition in $K \in (50, 100)$ for both LoCoMo and DialSim), with the boundary scaling with each benchmark's working set size.

\begin{figure}[t]
  \centering
  \includegraphics[width=0.9\linewidth]{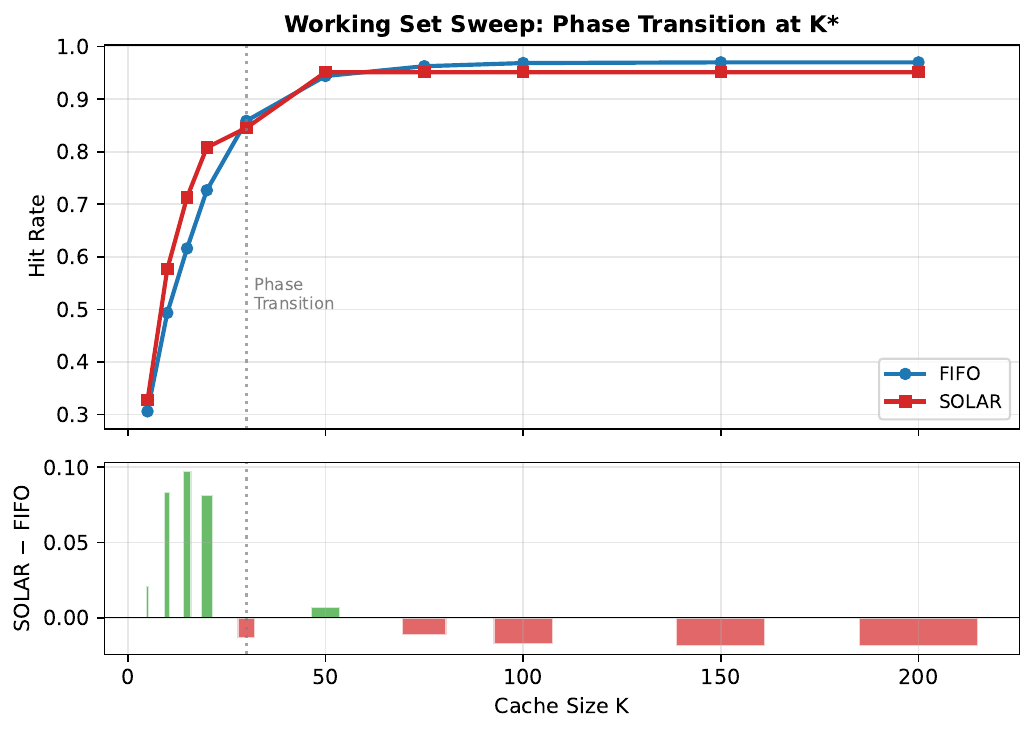}
  \vspace{-1em}
  \caption{Phase transition: SOLAR gain over FIFO as a function of cache size $K$ (synthetic working set sweep). The gain peaks at $K=15$ and crosses zero at $K \approx 30$, precisely at the effective working set boundary.}
  \vspace{-1em}
  \label{fig:phase_transition}
\end{figure}

\subsubsection{Experiment 3: Retrieval Noise U-Curve}

This experiment justifies our problem formulation by demonstrating that capacity constraints are meaningful even with unlimited storage. We create a large pool (50 topics $\times$ 100 items = 5000 total) with moderate inter-topic separation (cosine distance 0.30 between topic centers) and intra-topic diversity (std 0.15 around centers). It has two parts: a \emph{policy-agnostic} probe that establishes the underlying phenomenon, followed by a \emph{policy comparison} that shows SOLAR exploits it.

\smallskip
\noindent\textbf{(a) The phenomenon (policy-agnostic).} We first isolate the intrinsic effect of pool size on retrieval, with no admission or eviction involved: we pre-fill a pool with exactly $K$ items uniformly sampled from the 5000 and measure retrieval quality over 1000 queries (Table~\ref{tab:ucurve}, Figure~\ref{fig:ucurve}). Since the pool is frozen, this measures the retriever alone and is identical for every cache policy; no method is compared here.

\begin{table}[t]
\caption{Retrieval performance vs.\ pool size (frozen pool, policy-agnostic, 3-seed avg). Performance peaks at $K \approx 1000$ then declines sharply.}
\vspace{-1em}
\label{tab:ucurve}
\small
\centering
\begin{tabular}{lcccccc}
\toprule
Pool K & 50 & 200 & 500 & \textbf{1000} & 2000 & 5000 \\
\midrule
Hit Rate & .067 & .170 & .186 & \textbf{.201} & .195 & .090 \\
Prec@3 & .022 & .058 & .065 & \textbf{.070} & .068 & .030 \\
\bottomrule
\end{tabular}
\end{table}

Hit rate increases monotonically from K=50 to K=1000 (coverage effect: more items $\to$ higher chance of having a relevant item in the pool), then drops by 55\% from K=1000 to K=5000 (noise effect: too many similar-but-irrelevant items confuse the retriever). Precision@3 tells the same story: at K=5000, the top-3 results contain mostly irrelevant items from ``neighboring'' topics in embedding space. This establishes that the capacity constraint in LLM memory systems is \emph{informational}, not physical: even with free, infinite storage, indiscriminate accumulation degrades downstream performance. Admission control is not about saving space; it is about maintaining the signal-to-noise ratio of the retrieval pool.

\smallskip
\noindent\textbf{(b) Policy comparison: SOLAR vs.\ FIFO.} The probe above shows \emph{that} an overly large pool can hurt; we now ask whether an online policy can avoid it. We run the same workload in \emph{stream} mode (items arrive over time, $K$ is the live cache capacity) and compare SOLAR against FIFO (Table~\ref{tab:stream}). Two observations. First, FIFO's noise collapse is \emph{muted} online ($0.087$--$0.090$ across $K \geq 500$, no sharp drop at $K = 5000$): its live cache holds the most recent $K$ arrivals, temporally clustered rather than a uniform sample of all topics, so it accumulates far less cross-topic confusion than the frozen pre-fill. The probe in part (a) is thus a worst case that isolates the noise effect, while online dynamics are milder. Second, and more importantly, SOLAR uniformly dominates FIFO at every capacity and its hit rate is \emph{invariant} to the nominal $K$ (flat at $0.094$): selective admission caps the \emph{effective} pool regardless of how large $K$ is set, so SOLAR needs no capacity tuning, whereas FIFO is lower and more $K$-sensitive.

\begin{table}[t]
\caption{Stream-mode hit rate vs.\ live capacity $K$ (3-seed avg). SOLAR dominates FIFO at every $K$ and is invariant to the nominal capacity, as selective admission caps the effective pool.}
\vspace{-1em}
\label{tab:stream}
\centering
\small
\begin{tabular}{lcccc}
\toprule
Method & K=200 & K=500 & K=1000 & K=5000 \\
\midrule
FIFO & .075 & .089 & .087 & .090 \\
\textbf{SOLAR} & \textbf{.083} & \textbf{.094} & \textbf{.094} & \textbf{.094} \\
\bottomrule
\end{tabular}
\vspace{-1em}
\end{table}

\begin{figure}[t]
  \centering
  \includegraphics[width=0.9\linewidth]{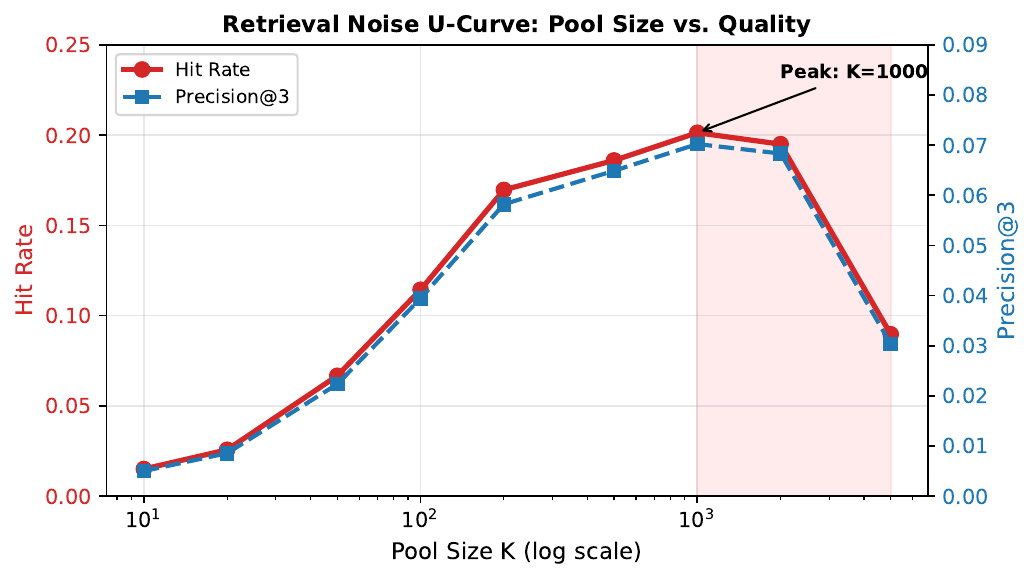}
  \vspace{-1em}
  \caption{Retrieval noise inverted-U curve: hit rate vs.\ pool size (log scale). Performance peaks at $K \approx 1000$ and declines 55\% at $K = 5000$ due to embedding space confusion. This justifies capacity constraints as a retrieval noise phenomenon.}
  \label{fig:ucurve}
  \vspace{-1em}
\end{figure}

\subsection{Learning Dynamics}

To understand \emph{how} SOLAR improves over time, we analyze the learning slope (F1 improvement per step) on LoCoMo (Table~\ref{tab:slope}; the trajectory is plotted in Figure~\ref{fig:learning_curve}). SOLAR attains the highest slope at $K \in \{20, 50\}$ and is statistically tied for the highest at $K=10$, so its cache quality improves at least as fast as any baseline at every $K$. The margin over FIFO is substantial: $1.8$--$3.1\times$ FIFO's slope across the three settings (e.g., $1.26$ vs.\ $0.40$ at K=50), so SOLAR's advantage \emph{grows over time}, a desirable property for long-running deployments. Classic heuristics fare worst at larger $K$: at K=50, LFU and ARC have the smallest slopes in the table, barely improving with more data, the empirical signature of near-linear regret (Theorem~\ref{thm:fifo-regret}).

\begin{table}[t]
\caption{Learning slope ($\times 10^3$) on LoCoMo. SOLAR learns $1.8$--$3.1\times$ faster than FIFO across cache sizes.}
\vspace{-1em}
\label{tab:slope}
\small
\centering
\begin{tabular}{lccccccc}
\toprule
& \textbf{SOLAR} & SOLAR-A & SOLAR-E & FIFO & LRU & LFU & ARC \\
\midrule
K=10 & \textbf{1.31} & 1.14 & 1.32 & 0.64 & 0.79 & 0.96 & 0.48 \\
K=20 & \textbf{2.30} & 1.63 & 0.69 & 1.25 & 0.73 & 0.57 & 0.91 \\
K=50 & \textbf{1.26} & 0.82 & 0.62 & 0.40 & 0.68 & 0.28 & 0.25 \\
\bottomrule
\end{tabular}
\vspace{-1em}
\end{table}

\subsection{Latency} SOLAR's policy decisions are pure arithmetic ($O(K)$ per modification: threshold check, posterior sampling, $\arg\min$), adding $<$1\,ms per step. End-to-end latency is therefore dominated by LLM inference ($\sim$1200--1400\,ms per GPT-4o-mini call) and is statistically indistinguishable across all policies. This contrasts with A-MAC~\cite{zhang2026amac}, which issues an LLM call \emph{per admission} ($\sim$2644\,ms overhead); SOLAR attains its F1 gains with no perceptible latency cost.

\subsection{Summary of Experimental Findings}
\textbf{1) Classic heuristics fail} (RQ1): LRU, LFU, and ARC systematically underperform FIFO on semantic workloads, validated across 2 benchmarks, 4 cache sizes, and 3 seeds.
\textbf{2) SOLAR wins at tight cache} (RQ2): 5--75\% relative improvement over FIFO when $K \leq 50$, with improvements growing over time ($1.4$--$3.1\times$ learning slope).
\textbf{3) Super-additive synergy} (RQ3): Combined effect exceeds sum of parts by 40\%, driven by the modification timing and content selection feedback loop.
\textbf{4) Phase transition} (RQ4): Clear boundary at the working set size. Below it, selectivity dominates; above it, coverage dominates. This provides actionable guidance for deployment.
\textbf{5) Retrieval noise justifies capacity constraints}: Inverted-U curve demonstrates that pool growth degrades performance, validating our problem formulation.
\textbf{6) Zero latency overhead}: SOLAR adds $<$1ms policy computation per step, negligible vs.\ LLM inference time, and requires no additional LLM calls.


\section{Discussion}
\label{sec:discussion}

\subsection{Why Classic Heuristics Fail}

The failure of LRU, LFU, and ARC reflects a fundamental mismatch between their assumptions and semantic workload properties. LRU assumes temporal locality, but topic progression is driven by evolving user intent, so recency of retrieval carries no predictive power. LFU assumes frequency concentration, but in diverse conversations frequency reflects age in cache rather than intrinsic value, degenerating into a ``keep the oldest'' policy. ARC combines both signals, yet when neither is informative, adapting between them provides no benefit. This pattern likely generalizes to any retrieval system over semantically diverse items without strong access locality.

\subsection{The Coverage-Selectivity Tradeoff}

Our experiments consistently identify a phase transition: below the working set size, selectivity dominates (5--23\% relative gains on LoCoMo, up to 75\% on the sparser DialSim); above it, coverage dominates (FIFO wins by never missing an opportunity to store). The transition point $K^*$ scales with effective working set size: $K^* \in (50, 100)$ on LoCoMo, $\approx 30$ on synthetic. System designers should estimate their working set (e.g., via the FIFO saturation curve) and deploy selective policies only below $K^*$.

\subsection{Retrieval Noise as the True Bottleneck}

The inverted-U curve reveals that capacity constraints in semantic retrieval are informational, not physical: beyond $\sim$1000 items, embedding-space crowding degrades precision faster than coverage gains. Admission control is therefore not about saving storage but about maintaining signal-to-noise ratio; the 17\% admission rate is a precision optimization, not just a space optimization.

\section{Conclusion}
\label{sec:conclusion}

We formalized LLM agent experience memory as an online semantic cache replacement problem, bridging classical online-algorithm theory and practical LLM memory management. We showed that classic cache heuristics systematically fail on semantic workloads, underperforming even FIFO across two datasets. Our framework, SOLAR, derives modification timing from regret accumulation and content selection from Bayesian online learning, requiring no external predictors, offline training, or LLM calls, and adding $<$1\,ms overhead per step. SOLAR delivers 5--75\% relative improvement over FIFO at tight cache sizes with super-additive synergy between its two mechanisms, while controlled synthetic experiments confirm the theory (FIFO thrashing, the selectivity--coverage phase transition, and the retrieval-noise U-curve that motivates capacity management even under unlimited storage). These results establish semantic cache management as a principled challenge for the data management community, with connections to learned index structures and adaptive query processing.


\clearpage

\bibliographystyle{ACM-Reference-Format}
\bibliography{references}

@software{Chase_LangChain_2022,
  author       = {Chase, Harrison},
  month        = oct,
  title        = {{LangChain}},
  url          = {https://github.com/langchain-ai/langchain},
  year         = {2022},
  version      = {0.3},                     
  note         = {Accessed: 2026-06-16}     
}

@article{wong2006web,
  title={Web cache replacement policies: a pragmatic approach},
  author={Wong, Kin-Yeung},
  journal={IEEE Network},
  volume={20},
  number={1},
  pages={28--34},
  year={2006},
  publisher={IEEE}
}

@article{sun2026cacherag,
  title={CacheRAG: A Semantic Caching System for Retrieval-Augmented Generation in Knowledge Graph Question Answering},
  author={Sun, Yushi and Chen, Lei},
  journal={arXiv preprint},
  year={2026},
  eprint={2604.26176}
}

@inproceedings{sun2025kerag,
  title={{KERAG}: Knowledge-Enhanced Retrieval-Augmented Generation for Advanced Question Answering},
  author={Sun, Yushi and Sun, Kai and Xu, Ethan Yifan and Yang, Xiao and Dong, Xin Luna and Tang, Nan and Chen, Lei},
  booktitle={Proceedings of the 2025 Conference on Empirical Methods in Natural Language Processing (EMNLP)},
  year={2025}
}

@inproceedings{yang2024crag,
  title={{CRAG} -- Comprehensive {RAG} Benchmark},
  author={Yang, Xiao and Sun, Kai and Xin, Hao and Sun, Yushi and Bhalla, Nikhil and Chen, Xiangsen and Choudhary, Sajal and Gui, Rongze Daniel and Jiang, Ziran Will and Jiang, Ziyu and Kong, Lingkun and Moran, Brian and Wang, Jiaqi and Xu, Yifan Ethan and Yang, An and Yuan, Eting and Zha, Hanwen and Tang, Nan and Chen, Lei and Scheffer, Nicolas and Liu, Yue and Shah, Nirav and Wanga, Rakesh and Kumar, Anuj and Yih, Wen-tau and Dong, Xin Luna},
  booktitle={Advances in Neural Information Processing Systems (NeurIPS)},
  year={2024}
}

@article{sun2026gravity,
  title={{GRAVITY}: Architecture-Agnostic Structured Anchoring for Long-Horizon Conversational Memory},
  author={Sun, Yushi and Cao, Bowen and Fang, Dong and Su, Lingfeng and Lam, Wai},
  journal={arXiv preprint},
  year={2026}
}

@inproceedings{megiddo2003arc,
  title={{ARC}: A Self-Tuning, Low Overhead Replacement Cache},
  author={Megiddo, Nimrod and Modha, Dharmendra S.},
  booktitle={Proceedings of the 2nd USENIX Conference on File and Storage Technologies (FAST)},
  pages={115--130},
  year={2003},
  url={https://www.usenix.org/legacy/events/fast03/tech/megiddo.html}
}

@inproceedings{johnson1994twoqueu,
  title={{2Q}: A Low Overhead High Performance Buffer Management Replacement Algorithm},
  author={Johnson, Theodore and Shasha, Dennis},
  booktitle={Proceedings of the 20th International Conference on Very Large Data Bases (VLDB)},
  pages={439--450},
  year={1994},
  url={https://www.vldb.org/conf/1994/P439.PDF}
}

@article{einziger2017tinylfu,
  title={{TinyLFU}: A Highly Efficient Cache Admission Policy},
  author={Einziger, Gil and Friedman, Roy and Manes, Ben},
  journal={ACM Transactions on Storage},
  volume={13},
  number={4},
  pages={1--31},
  year={2017},
  doi={10.1145/3149371}
}

@article{belady1966study,
  title={A Study of Replacement Algorithms for a Virtual-Storage Computer},
  author={B{\'e}l{\'a}dy, L{\'a}szl{\'o} A.},
  journal={IBM Systems Journal},
  volume={5},
  number={2},
  pages={78--101},
  year={1966},
  doi={10.1147/sj.52.0078}
}

@article{sleator1985amortized,
  title={Amortized Efficiency of List Update and Paging Rules},
  author={Sleator, Daniel D. and Tarjan, Robert E.},
  journal={Communications of the ACM},
  volume={28},
  number={2},
  pages={202--208},
  year={1985},
  doi={10.1145/2786.2793}
}

@inproceedings{lykouris2018competitive,
  title={Competitive Caching with Machine Learned Advice},
  author={Lykouris, Thodoris and Vassilvitskii, Sergei},
  booktitle={Proceedings of the 35th International Conference on Machine Learning (ICML)},
  pages={3296--3305},
  year={2018},
  eprint={1802.05399}
}

@inproceedings{rohatgi2020near,
  title={Near-Optimal Bounds for Online Caching with Machine Learned Advice},
  author={Rohatgi, Dhruv},
  booktitle={Proceedings of the 31st Annual ACM-SIAM Symposium on Discrete Algorithms (SODA)},
  pages={1834--1845},
  year={2020},
  eprint={1910.12172}
}

@inproceedings{chen2025guard,
  title={Robustifying Learning-Augmented Caching Efficiently without Compromising 1-Consistency},
  author={Chen, Peng and Zhao, Hailiang and Zhang, Jiaji and Tang, Xueyan and Wang, Yixuan and Deng, Shuiguang},
  booktitle={Advances in Neural Information Processing Systems (NeurIPS)},
  year={2025},
  url={https://neurips.cc/virtual/2025/loc/san-diego/poster/116615}
}

@article{thompson1933likelihood,
  title={On the Likelihood that One Unknown Probability Exceeds Another in View of the Evidence of Two Samples},
  author={Thompson, William R.},
  journal={Biometrika},
  volume={25},
  number={3/4},
  pages={285--294},
  year={1933},
  doi={10.1093/biomet/25.3-4.285}
}

@InProceedings{agrawal2012analysis,
  title = 	 {Analysis of Thompson Sampling for the Multi-armed Bandit Problem},
  author = 	 {Agrawal, Shipra and Goyal, Navin},
  booktitle = 	 {Proceedings of the 25th Annual Conference on Learning Theory},
  pages = 	 {39.1--39.26},
  year = 	 {2012},
  editor = 	 {Mannor, Shie and Srebro, Nathan and Williamson, Robert C.},
  volume = 	 {23},
  series = 	 {Proceedings of Machine Learning Research},
  address = 	 {Edinburgh, Scotland},
  month = 	 {25--27 Jun},
  publisher =    {PMLR},
  url = 	 {https://proceedings.mlr.press/v23/agrawal12.html}
}

@inproceedings{dar1996semantic,
  title={Semantic Data Caching and Replacement},
  author={Dar, Shaul and Franklin, Michael J. and J{\'o}nsson, Bj{\"o}rn {\TH}{\'o}r and Srivastava, Divesh and Tan, Michael},
  booktitle={Proceedings of the 22nd International Conference on Very Large Data Bases (VLDB)},
  pages={330--341},
  year={1996},
  url={https://www.vldb.org/conf/1996/P330.PDF}
}

@inproceedings{bang2023gptcache,
  title={{GPTCache}: An Open-Source Semantic Cache for {LLM} Applications},
  author={Bang, Fu},
  booktitle={Proceedings of the 3rd Workshop for Natural Language Processing Open Source Software (NLP-OSS)},
  pages={212--218},
  year={2023},
  doi={10.18653/v1/2023.nlposs-1.24}
}

@article{sun2025dam,
  title={Beyond Heuristics: A Decision-Theoretic Framework for Agent Memory Management},
  author={Sun, Changzhi and Chen, Xiangyu and Luo, Jixiang and Zhang, Dell and Li, Xuelong},
  journal={arXiv preprint arXiv:2512.21567},
  year={2025},
  eprint={2512.21567}
}

@article{zhang2026amac,
  title={Adaptive Memory Admission Control for {LLM} Agents},
  author={Zhang, Guilin and Jiang, Wei and Wang, Xiejiashan and Behr, Aisha and Zhao, Kai and Friedman, Jeffrey and Chu, Xu and Anoun, Amine},
  journal={arXiv preprint arXiv:2603.04549},
  year={2026},
  eprint={2603.04549}
}

@article{zhang2025memact,
  title={Memory as Action: Autonomous Context Curation for Long-Horizon Agentic Tasks},
  author={Zhang, Yuxiang and Shu, Jiangming and Ma, Ye and Lin, Xueyuan and Wu, Shangxi and Sang, Jitao},
  journal={arXiv preprint arXiv:2510.12635},
  year={2025},
  eprint={2510.12635}
}

@article{packer2023memgpt,
  title={{MemGPT}: Towards {LLMs} as Operating Systems},
  author={Packer, Charles and Fang, Vivian and Patil, Shishir G. and Lin, Kevin and Wooders, Sarah and Gonzalez, Joseph E.},
  journal={arXiv preprint arXiv:2310.08560},
  year={2023},
  eprint={2310.08560}
}

@article{wang2023voyager,
  title={Voyager: An Open-Ended Embodied Agent with Large Language Models},
  author={Wang, Guanzhi and Xie, Yuqi and Jiang, Yunfan and Mandlekar, Ajay and Xiao, Chaowei and Zhu, Yuke and Fan, Linxi and Anandkumar, Anima},
  journal={arXiv preprint arXiv:2305.16291},
  year={2023},
  eprint={2305.16291}
}

@inproceedings{park2023generative,
  title={Generative Agents: Interactive Simulacra of Human Behavior},
  author={Park, Joon Sung and O'Brien, Joseph C. and Cai, Carrie J. and Morris, Meredith Ringel and Liang, Percy and Bernstein, Michael S.},
  booktitle={Proceedings of the 36th Annual ACM Symposium on User Interface Software and Technology (UIST)},
  pages={1--22},
  year={2023},
  doi={10.1145/3586183.3606763}
}

@article{ai2025memorybench,
  title={{MemoryBench}: A Benchmark for Memory and Continual Learning in {LLM} Systems},
  author={Ai, Qingyao and Tang, Yichen and Wang, Changyue and Long, Jianming and Su, Weihang and Liu, Yiqun},
  journal={arXiv preprint arXiv:2510.17281},
  year={2025},
  eprint={2510.17281}
}

@article{jang2023dialsim,
  title={{DialSim}: A Real-Time Simulator for Evaluating Long-Term Dialogue Understanding of Conversational Agents},
  author={Kim, Jiho and Chay, Woosog and Hwang, Hyeonji and Kyung, Daeun and Chung, Hyunseung and Cho, Eunbyeol and Jo, Yohan and Choi, Edward},
  journal={arXiv preprint arXiv:2406.13144},
  year={2024},
  eprint={2406.13144}
}

@book{lattimore2020bandit,
  title={Bandit Algorithms},
  author={Lattimore, Tor and Szepesv{\'a}ri, Csaba},
  publisher={Cambridge University Press},
  year={2020},
  doi={10.1017/9781108571401}
}

@inproceedings{lewis2020retrieval,
  title={Retrieval-Augmented Generation for Knowledge-Intensive {NLP} Tasks},
  author={Lewis, Patrick and Perez, Ethan and Piktus, Aleksandra and Petroni, Fabio and Karpukhin, Vladimir and Goyal, Naman and K{\"u}ttler, Heinrich and Lewis, Mike and Yih, Wen-tau and Rockt{\"a}schel, Tim and Riedel, Sebastian and Kiela, Douwe},
  booktitle={Advances in Neural Information Processing Systems},
  volume={33},
  pages={9459--9474},
  year={2020},
  eprint={2005.11401}
}

@article{gao2024retrieval,
  title={Retrieval-Augmented Generation for Large Language Models: A Survey},
  author={Gao, Yunfan and Xiong, Yun and Gao, Xinyu and Jia, Kangxiang and Pan, Jinliu and Bi, Yuxi and Dai, Yi and Sun, Jiawei and Wang, Meng and Wang, Haofen},
  journal={arXiv preprint arXiv:2312.10997},
  year={2024},
  eprint={2312.10997}
}

@book{borodin2005online,
  title={Online Computation and Competitive Analysis},
  author={Borodin, Allan and El-Yaniv, Ran},
  publisher={Cambridge University Press},
  year={2005},
  isbn={9780521619462}
}

@inproceedings{o1993lru,
  title={The {LRU-K} Page Replacement Algorithm for Database Disk Buffering},
  author={O'Neil, Elizabeth J. and O'Neil, Patrick E. and Weikum, Gerhard},
  booktitle={Proceedings of the ACM SIGMOD International Conference on Management of Data},
  pages={297--306},
  year={1993},
  doi={10.1145/170035.170081}
}

@inproceedings{jiang2002lirs,
  title={{LIRS}: An Efficient Low Inter-reference Recency Set Replacement Policy to Improve Buffer Cache Performance},
  author={Jiang, Song and Zhang, Xiaodong},
  booktitle={Proceedings of the ACM SIGMETRICS International Conference on Measurement and Modeling of Computer Systems},
  pages={31--42},
  year={2002},
  doi={10.1145/511334.511340}
}

@inproceedings{song2020learning,
  title={Learning Relaxed {B}elady for Content Distribution Network Caching},
  author={Song, Zhenyu and Berger, Daniel S. and Li, Kai and Lloyd, Wyatt},
  booktitle={Proceedings of the 17th USENIX Symposium on Networked Systems Design and Implementation (NSDI)},
  pages={529--544},
  year={2020},
  url={https://www.usenix.org/conference/nsdi20/presentation/song}
}

@inproceedings{liu2020imitation,
  title={An Imitation Learning Approach for Cache Replacement},
  author={Liu, Evan Zheran and Hashemi, Milad and Swersky, Kevin and Ranganathan, Parthasarathy and Ahn, Junwhan},
  booktitle={Proceedings of the 37th International Conference on Machine Learning (ICML)},
  pages={6237--6247},
  year={2020},
  eprint={2006.16239}
}

@inproceedings{zhong2024memorybank,
  title={{MemoryBank}: Enhancing Large Language Models with Long-Term Memory},
  author={Zhong, Wanjun and Guo, Lianghong and Gao, Qiqi and Ye, He and Wang, Yanlin},
  booktitle={Proceedings of the AAAI Conference on Artificial Intelligence},
  volume={38},
  pages={19724--19731},
  year={2024},
  url={https://ojs.aaai.org/index.php/AAAI/article/view/29946}
}

@article{chhikara2025mem0,
  title={{Mem0}: Building Production-Ready {AI} Agents with Scalable Long-Term Memory},
  author={Chhikara, Prateek and others},
  journal={arXiv preprint arXiv:2504.19413},
  year={2025},
  eprint={2504.19413}
}

@inproceedings{locomo2024,
  title={Evaluating Very Long-Term Conversational Memory of {LLM} Agents},
  author={Maharana, Adyasha and Lee, Dong-Ho and Tuber, Sergey and Bansal, Mohit},
  booktitle={Proceedings of the 62nd Annual Meeting of the Association for Computational Linguistics (ACL)},
  year={2024},
  doi={10.18653/v1/2024.acl-long.747}
}

@article{zhang2024survey,
  title={A Survey on the Memory Mechanism of Large Language Model Based Agents},
  author={Zhang, Zeyu and others},
  journal={arXiv preprint arXiv:2404.13501},
  year={2024},
  eprint={2404.13501}
}

@misc{solar-techreport,
  title={{SOLAR}: Complete Proofs (Extended Supplement)},
  author={Sun, Yushi and Cao, Bowen and Lam, Wai},
  year={2026},
  howpublished={\url{https://github.com/ysunbp/SOLAR/blob/main/proof.pdf}},
  note={Accessed: 2026}
}

\end{document}